\documentclass[10pt,fullpage,letterpaper]{article}

\oddsidemargin 0in
\evensidemargin 0in
\textwidth 6.5in
\topmargin -0.5in
\textheight 9.0in

\usepackage[utf8]{inputenc} 
\usepackage[T1]{fontenc}    
\usepackage{hyperref}       
\usepackage{url}            
\usepackage{booktabs}       
\usepackage{amsfonts}       
\usepackage{nicefrac}       
\usepackage{microtype}      
\usepackage[table]{xcolor}         
\usepackage{algorithm,algpseudocode}
\usepackage{enumerate}
\usepackage{amsmath, amsfonts, amssymb}
\usepackage{amsthm}
\usepackage{mathrsfs}
\usepackage{bm}
\usepackage{graphicx}
\usepackage{alltt}
\usepackage{tikz}
\usepackage{float}
\usepackage{booktabs}
\usepackage{hyperref}
\usepackage{array}
\usepackage{dsfont}
\usepackage{nicefrac}
\usepackage{comment}
\usepackage{anyfontsize}
\usepackage{longtable}
\usepackage{natbib}
\usepackage{caption}
\usepackage[symbol]{footmisc}
\allowdisplaybreaks
\usepackage{mathtools}
\usepackage{thmtools}
\usepackage{thm-restate}
\usepackage{makecell}
\usepackage{tablefootnote}
\usepackage{subcaption}

\DeclareMathOperator*{\argmin}{argmin}
\DeclareMathOperator*{\argmax}{argmax}

\newcommand{\Mp}[1]{\left[#1\right]}

\newcommand{\abs}[1]{\left|#1\right|}
\newcommand{\Norm}[1]{\left\|#1\right\|}

\newcommand{\ceil}[1]{\left\lceil#1\right\rceil}
\newcommand{\inner}[1]{\left\langle#1\right\rangle}

\newcommand{\K}{\mathcal{K}}
\newcommand{\A}{\mathcal{A}}

\newcommand{\F}{\mathcal{F}}

\newcommand{\R}{\mathbb{R}}

\newcommand{\X}{\mathcal{X}}
\newcommand{\Y}{\mathcal{Y}}
\renewcommand{\O}{\mathcal{O}}

\newcommand{\gap}{\mathrm{Gap}}
\renewcommand{\L}{\mathcal{L}}

\newtheorem{lemma}{Lemma}
\newtheorem{definition}{Definition}
\newtheorem{assumption}{Assumption}
\newtheorem{proposition}{Proposition}

\newtheorem{remark}{Remark}

\newcounter{protocol}
\makeatletter
\newenvironment{protocol}[1][htb]{%
  \let\c@algorithm\c@protocol
  \renewcommand{\ALG@name}{Protocol}
  \begin{algorithm}[#1]%
  }{\end{algorithm}
}
\makeatother

\usepackage{color}
\definecolor{ForestGreen}{rgb}{0.1333,0.5451,0.1333}
\hypersetup{colorlinks,
	linkcolor=ForestGreen,
	citecolor=ForestGreen,
	urlcolor=black,
	linktocpage,
 plainpages=false}

\title{Learning Optimal Tax Design in Nonatomic Congestion Games}

\author{Qiwen Cui\footnote{University of Washington. Email: \url{qwcui@cs.washington.edu}} \and Maryam Fazel\footnote{University of Washington. Email: \url{mfazel@uw.edu}} \and Simon S. Du\footnote{University of Washington. Email: \url{ssdu@cs.washington.edu}}}

\begin{document}
\maketitle

\begin{abstract}
In multiplayer games, self-interested behavior among the players can harm the social welfare. Tax mechanisms are a common method to alleviate this issue and induce socially optimal behavior. In this work, we take the initial step of learning the optimal tax that can maximize social welfare with limited feedback in congestion games. We propose a new type of feedback named \emph{equilibrium feedback}, where the tax designer can only observe the Nash equilibrium after deploying a tax plan. Existing algorithms are not applicable due to the exponentially large tax function space, nonexistence of the gradient, and nonconvexity of the objective. To tackle these challenges, we design a computationally efficient algorithm that leverages several novel components: (1) a piece-wise linear tax to approximate the optimal tax; (2) extra linear terms to guarantee a strongly convex potential function; (3) an efficient subroutine to find the exploratory tax that can provide critical information about the game. The algorithm can find an $\epsilon$-optimal tax with $O(\beta F^2/\epsilon)$ sample complexity, where $\beta$ is the smoothness of the cost function and $F$ is the number of facilities.
\end{abstract}

\section{Introduction}

In modern society, large-scale systems often consist of many self-interested players with shared resources, such as transportation and communication networks. Importantly, the objectives of individual players are not always aligned with the system efficiency, and the system designer should take this into consideration. A widely known example is Braess's paradox, where adding more roads to a network can make the network more congested \citep{braess1968paradoxon}. Price of anarchy is a notion that measures the inefficiency caused by selfish behavior compared with optimal centralized behavior \citep{koutsoupias1999worst}. Characterizing such inefficiency has been an active research area with applications in resource allocation \citep{marden2014generalized}, traffic congestion \citep{roughgarden2004bounding}, and others. The inefficiency motivates research on how to design 
mechanisms to improve performance even when the players are still behaving selfishly.

Tax mechanisms are a standard approach to resolving the inefficiency issue, which are widely studied in economics, operations research, and game theory. The goal of tax mechanisms is to incentivize self-interested players to follow socially optimal behavior by applying tax/subsidy. Congestion game is a widely studied class of game theory models characterizing the interactions between players sharing facilities, where the cost of each facility depends on the ``congestion'' level \citep{wardrop1952road,rosenthal1973class}. As a motivating example, in traffic routing games, each facility corresponds to an edge in a network, and each player chooses a path that connects her source node and target node. The cost of each facility corresponds to the latency of each edge, which depends on the number of players using that edge. Then, the tax can be interpreted as the toll collected by the road owner or the government to improve overall traffic efficiency \citep{bergendorff1997congestion}.

Most existing works on congestion game tax design focus on the computation complexity of the optimal tax \citep{NisaRougTardVazi07,caragiannis2010taxes}. They assume the tax designer has full knowledge of the underlying game, which is unrealistic in many applications. As Nash equilibrium is the only stable state of the system, we study a partial information feedback setting named ``equilibrium feedback", where the tax designer can only observe information about the Nash equilibrium.
The limited feedback information brings new challenges to the tax designer, and strategic exploration is necessary to learn or design the optimal tax. 
In this work, we 
aim to take the first step in learning optimal tax design for congestion games, and we study the following problem:
\begin{center}
    \emph{How can we learn the optimal tax design in congestion games with equilibrium feedback?}
\end{center}

Below we highlight our contributions.

\subsection{Main Contributions and Technical Novelties}

\textbf{1. The first algorithm for learning optimal tax design in congestion games.} To the best of our knowledge, this is the first result for learning optimal tax in congestion games with partial information feedback. Our algorithm enjoys $O(F^2\beta/\epsilon)$ sample complexity for learning an $\epsilon$-optimal tax, where $F$ is the number of facilities and $\beta$ is the smoothness coefficient of the cost function. The sample complexity has no dependence on the number of actions, which could be exponential in $F$. In addition, we provide an efficient implementation for network congestion games with $\widetilde{O}(\mathrm{poly}(V,E,\epsilon))$ computational complexity, where $V$ and $E$ are the numbers of the vertexes and edges in the network. Due to space limitation, we defer the computation analysis and experiments to Appendix \ref{sec:comp} and Appendix \ref{apx:exp}.

\textbf{2. Piece-wise linear function approximation.} We only assume the cost functions are smooth and make no parameterization assumptions as they are too strong to be satisfied in real-world applications. To tackle this challenge, we use piece-wise linear functions to approximate the optimal tax function. While only the values of the cost functions can be observed, we show that a carefully designed piece-wise linear function can approximate the unobservable optimal tax function well.

\textbf{3. Strongly convex potential function.} One challenge in tax design is controlling the sensitivity of Nash equilibrium w.r.t. tax perturbation. We always enforce tax functions with subgradient lower bounded by some positive value, which leads to a strongly convex potential function. As a result, the Nash equilibrium will be unique and Lipschitz with respect to tax perturbation. As the potential function for optimal tax is not necessarily strongly convex, we carefully choose the strong-convexity coefficient to balance the induced bias.

\textbf{4. Exploratory tax design.} Given the equilibrium feedback, the tax designer can only indirectly query the cost function by applying tax. Consequently, exploration in tax design becomes much more difficult than that in standard bandit problems where the player can directly query the value of an action \citep{lattimore2020bandit}. We design an exploratory tax that pushes the equilibrium to the ``boundary", where an additional tax perturbation will change the equilibrium and reveal information about at least one unknown facility.

In this work, we focus on the well-known nonatomic congestion games. We hope our algorithm and analysis provide new insight on the intriguing structure of nonatomic congestion games. In addition, the tax design algorithm might find 
applications in real-world problems such as toll design in traffic networks. Due to space limitation, most proofs are deferred to the appendix.

\textbf{Notations.} $[m]=\{1,2,\cdots,m\}$. For a set of real numbers $\K$ and a real number $x:\min\{\K\}\leq x\leq \max\{\K\}$, we define $[x]_{\K}^+:=\min_{y\in \K:y\geq x}y$ and $[x]_{\K}^-:=\max_{y\in \K:y\leq x}y$. The clip operation $\mathrm{clip}(a,l,r):=\min\{\max\{a,l\},r\}$ clips $a$ into the interval $[l,r]$. We use $O(\cdot)$ to hide absolute constants and $\widetilde{O}(\cdot)$ to hide $\mathrm{polylog}$ terms as well. A function $f:\X\mapsto\R$ is $\alpha$-strongly convex if $f(y)\geq f(x)+\nabla f(x)^\top (y-x)+\frac{\alpha}{2}\Norm{y-x}_2^2,\forall x,y\in\X$. $f$ is $\beta$-smooth if $\Norm{\nabla f(x)-\nabla f(y)}_2\leq\beta\Norm{x-y}_2,\forall x,y\in\X$.

\section{Related Work}

\paragraph{Learning in congestion games.} We refer the readers to the textbook \citep{NisaRougTardVazi07} for a general introduction to congestion games, the price of anarchy and tax mechanisms. Nonatomic congestion games were first studied in \citep{pigou1912wealth} and formalized by \citep{wardrop1952road}. Atomic congestion games were introduced by \citep{rosenthal1973class} and the connection with potential games is developed by \citep{monderer1996potential}. In contrast to general-sum games without structures, (approximate) Nash equilibrium can be computed efficiently in congestion games due to the existence of the potential function. Recently, various algorithms are developed to learn the Nash equilibrium in congestion games with different feedback oracles \citep{krichene2015online,chen2016generalized,cui2022learning,jiang2022offline,panageas2023semi,dong2023taming,dadi2024polynomial}. These algorithms are derived from the perspective of the players in the system, while our algorithm is essentially different in that it is utilized by the system designer to induce better equilibrium.

\paragraph{Optimal tax design in congestion games.} For nonatomic congestion games, optimal tax design has a closed-form solution known as the marginal cost mechanism \citep{NisaRougTardVazi07}. For atomic congestion games, the marginal cost mechanism can no longer improve the efficiency \citep{paccagnan2021optimal}. Instead, other mechanisms are proposed for optimal local/global and congestion dependent/independent tax in atomic congestion games \citep{caragiannis2010taxes,bilo2019dynamic,paccagnan2021optimal,paccagnan2021congestion,harks2015computing}. Notably, all of these mechanisms assume full knowledge of the game while we consider learning with partial information feedback.

\paragraph{Stackelberg games.} 
Stackelberg game \citep{von2010market} models the interactions between leaders and followers such that leaders take actions first and the followers make decisions after observing leaders' actions. Tax design can be formulated as a Stackelberg game where the designer is the leader and the game players are the followers. Equipped with a best response oracle to predict followers' actions, \citet{letchford2009learning,blum2014learning,peng2019learning} propose algorithms for learning Stackelberg equilibrium. Recently, \citet{bai2021sample,zhong2021can,zhao2023online} generalize these results to learning Skackelberg equilibrium with bandit feedback, under finite actions or linear function approximation assumptions. For tax design, the search space is an exponentially large function space with complicated dependence on the objective. Consequently, existing results for Stackelberg games become vacuous when specialized to our problem.

\paragraph{Mathematical programming under equilibrium constraint.}
Tax design can be formulated as minimizing social cost with respect to tax under the constraint that players are following the equilibrium. This is known as mathematical programs with equilibrium constraints (MPEC). MPEC is a bilevel optimization problem and is NP-hard in general \citep{luo1996mathematical}. Existing approaches use specific inner loop algorithms to approach the equilibrium so that the gradient can be propagated to the outer loop \citep{li2020end,liu2022inducing,li2022differentiable, maheshwari2023convergent, li2023achieving, grontas2024big}, relying on a unique and differentiable equilibrium \citep{colson2007overview}. However, such an approach requires many strong assumptions, such as the tax designer can control the algorithm of the agents, convex objective function and parameterized tax function. In contrast, our results make none of these assumptions.

\section{Preliminaries}

\paragraph{Nonatomic congestion games.} A weighted nonatomic congestion game (congestion game) is described by the tuple $(\F,\A_{[m]},w_{[m]},c_{\F})$, where $\F$ is the set of facilities with cardinality $F$, $m$ is the number of commodities, $\A_i$ is the action set for commodity $i\in[m]$, $w_i\in[0,1]$ is the weight for commodity $i\in[m]$ such that $\sum_{i\in[m]}w_i=1$, and $c_f:[0,1]\mapsto[0,1]$ is the cost function for facility $f\in\F$. Each commodity consists of infinite number of infinitesimal players with a total load to be $w_i$. Each individual player is self-interested and has a negligible effect on the game.

In congestion games, action $a\in\A_i,i\in[m]$ is a subset of $\F$, i.e. $a\subseteq \F$, which denotes the facilities utilized by action $a$. For commodity $i\in[m]$, we use strategy $x_i=(x_{i,a})_{a\in\A_i}\in[0,w_i]^{|\A_i|}$ with constraint $\sum_{a\in\A_i}x_{i,a}=w_i$ to denote how the load is distributed over all the actions. The joint strategy for the game is represented by
$x=(x_1,x_2,\cdots,x_m)\in[0,1]^{A},$
where $A=\sum_{i\in[m]}|\A_i|$. We use $\X$ to denote the set of all feasible strategies.

A decentralized perspective of strategy $x_i$ for commodity $i$ is that each self-interested infinitesimal player follows a randomized strategy that chooses $a\in\A_i$ with probability proportional to $x_{i,a}$. With the law of large number, the load on action $a$ would be $x_{i,a}$.

\paragraph{Cost function.} For a strategy $x$, the cost of a facility is $c_f(l_f(x))$, where $l_f(x)=\sum_{i\in[m]}\sum_{a\in\A_i:f\in a}x_{i,a}$ is the load on facility $f$. The cost of an action $a$ is the sum of the facility cost that $a$ utilizes:
$c_a(x):=\sum_{f\in a}c_f(l_f(x)).$

 We make the following assumption on the cost function. Monotonicity is a standard congestion game assumption, which is also observed in many real-world applications as more players sharing one facility, each player will have less gain or more cost \citep{NisaRougTardVazi07}. Smoothness is a standard technical assumption for analysis.

\begin{assumption}\label{asp:cost}
We assume the cost function satisfies:
\begin{enumerate}
    \item Monotonicity: $c_f(\cdot)$ is non-decreasing for all $f\in\F$,
    \item Smoothness: $c_f(\cdot)$ is $\beta$-smooth for all $f\in\F$.
\end{enumerate}
\end{assumption}

\paragraph{Nash equilibrium.} Nash equilibrium in nonatomic congestion games, also known as the Wardrop equilibrium \citep{wardrop1952road}, is the strategy that no player has the incentive to deviate from its strategy as formalized in Definition \ref{def:nash}. In other words, Nash equilibrium is a stable state for a system with selfish players.

\begin{definition}\label{def:nash}
    A Nash equilibrium strategy $x$ is a joint strategy such that each player is choosing the best action: for any commodity $i\in[m]$ and actions $a,a'\in\A_i$, we have
    $$c_a(x)\leq c_{a'}(x), \ \mathrm{if}\ x_{i,a}>0.$$
    Similarly, an $\epsilon$-approximate Nash equilibrium $x$ satisfies that 
    $$\forall i\in[m], a,a'\in\A_i,c_a(x)\leq c_{a'}(x)+\epsilon, \ \mathrm{if}\ x_{i,a}>0.$$
\end{definition}

For a strategy $x$ and commodity $i$, actions $a\in\A_i$ such that $x_{i,a}>0$ are named as the ``in-support" actions and the others are ``off-support" actions. For a Nash equilibrium, in-support actions must all have the same cost and off-support actions are no better than in-support actions. It is well known that Nash equilibrium always exists in congestion games \citep{beckmann1956studies}.

\paragraph{Potential Function.} An important concept in congestion games is the potential function:
$$\Phi(x):=\sum_f\int_{0}^{l_f(x)}c_f(u)du.$$
If Assumption \ref{asp:cost} is satisfied, then $\Phi(x)$ is a convex function and Nash equilibrium is equivalent to the minimizer of the potential function \citep{beckmann1956studies}. 


\paragraph{Network congestion games.} Network congestion games are congestion games with multicommodity network structure, which are also known as the selfish routing games \citep{roughgarden2005selfish}. A multicommodity network is described by a directed graph $(\mathcal{V},\mathcal{E})$ where $\mathcal{V}$ is the vertex set and $\mathcal{E}$ is the edge (facility) set. In addition, each commodity $i\in[m]$ corresponds to a pair of source and target vertex $(s_i,t_i)$, and actions are all feasible paths connecting $s_i$ and $t_i$. Each edge is associated with a nondecreasing cost (latency) function.

\section{Tax Design for Congestion Games}

In this section, we introduce tax design in congestion games. Before we get into the details, we will first introduce some notions to simplify the problem.

\subsection{Polytope Description for Congestion Games}

For a strategy $x_i\in\R^{|\A_i|}$, the dimension $|\A_i|$ can be as large as $2^F$. Instead, it would be convenient to consider the facility load $y_i\in\R^F$ such that $y_{i,f}=\sum_{a\in\A_i:f\in a}x_{i,a}$. In addition, we define $y=\sum_{i\in[m]}y_i\in\R^F$ to be the total facility load. We use $\phi_{(i)}(\cdot)$ to denote the reparameterization mapping:
$$\phi(x)=y,\phi_{i}(x_{i})=y_{i},\forall i\in[m],$$
and we set $\Y=\{y\in\R^F:\exists x\in\X,y=\phi(x)\}$ to be the set of all feasible loads. Note that $\phi$ is not necessarily a bijection, i.e., there could exist multiple strategies sharing the same load. We use $\phi^{-1}(y):=\{x\in\X:\phi(x)=y\}$ to denote the set of strategies that are mapped to load $y$. The potential function can be defined after the repameterization as well:
$$\Phi^{\mathrm{repa}}(y):=\sum_f\int_0^{y_f}c_f(u)du=\Phi(x),\forall x\in\Phi^{-1}(y).$$
Importantly, $\Phi^{\mathrm{repa}}(y)$ does not depend on the choice of strategy $x\in\phi^{-1}(y)$. For the reparameterized potential function, we have the following lemma showing that it is almost equivalent to the original potential function. When it is clear from the context, we will simplify $\Phi^{\mathrm{repa}}$ as $\Phi$.

\begin{restatable}{lemma}{repa}
    \label{lemma:repa Phi}
    $\Phi^{\mathrm{repa}}$ is convex under Assumption \ref{asp:cost}. If $y^*=\argmin_y \Phi^{\mathrm{repa}}(y)$, then for any $x\in\phi^{-1}(y^*)$, $x$ is a Nash equilibrium.
\end{restatable}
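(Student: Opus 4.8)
The plan is to treat the two assertions separately, leaning on the fact already recorded in the preliminaries that, under Assumption~\ref{asp:cost}, the Nash equilibria of the game are exactly the minimizers of the potential $\Phi$ over $\X$. So it suffices to (i) establish convexity of $\Phi^{\mathrm{repa}}$, and (ii) show that a minimizer $y^*$ of $\Phi^{\mathrm{repa}}$ over $\Y$ pulls back, through any $x\in\phi^{-1}(y^*)$, to a minimizer of $\Phi$ over $\X$.

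\textbf{Convexity.} First I would note that $\Y=\phi(\X)$ is the image of the convex (and compact) polytope $\X$ under the linear map $\phi$, hence itself convex and compact, so it is meaningful to speak of convexity and of $\argmin_{y}\Phi^{\mathrm{repa}}(y)$ on $\Y$. Writing $\Phi^{\mathrm{repa}}(y)=\sum_{f}g_f(y_f)$ with $g_f(t):=\int_0^{t}c_f(u)\,du$, each $g_f$ is differentiable with $g_f'(t)=c_f(t)$, which is non-decreasing by the monotonicity part of Assumption~\ref{asp:cost}; hence each $g_f$ is convex as a scalar function, and therefore convex as a function of $y$. A finite sum of convex functions is convex, giving convexity of $\Phi^{\mathrm{repa}}$. (Only monotonicity is needed here; smoothness plays no role in this lemma.)

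\textbf{From the reparameterized minimizer to a Nash equilibrium.} The key identity is $\Phi(x)=\Phi^{\mathrm{repa}}(\phi(x))$ for every $x\in\X$, which is precisely how $\Phi^{\mathrm{repa}}$ was defined and is well posed because the value is constant on each fiber of $\phi$. Since $\phi$ maps $\X$ onto $\Y$, this yields
$$\min_{x\in\X}\Phi(x)=\min_{x\in\X}\Phi^{\mathrm{repa}}(\phi(x))=\min_{y\in\Y}\Phi^{\mathrm{repa}}(y)=\Phi^{\mathrm{repa}}(y^*).$$
Then for any $x\in\phi^{-1}(y^*)$ we get $\Phi(x)=\Phi^{\mathrm{repa}}(\phi(x))=\Phi^{\mathrm{repa}}(y^*)=\min_{x'\in\X}\Phi(x')$, so $x$ is a global minimizer of $\Phi$ over $\X$, and hence a Nash equilibrium by the quoted equivalence between potential minimizers and Wardrop equilibria.

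\textbf{Main obstacle.} No individual step is deep; the point needing care is the bookkeeping around the possibly non-injective reparameterization $\phi$ — checking that $\Phi^{\mathrm{repa}}$ is genuinely well defined on $\Y$, that $\Y$ inherits convexity and compactness from $\X$ so the $\argmin$ exists and the notion of convexity applies, and that it is exactly the surjectivity of $\phi:\X\to\Y$ that lets one transfer minimization between the two representations. Once these are in place the conclusion is immediate.
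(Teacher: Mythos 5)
Your proof is correct and takes essentially the same approach as the paper: convexity of $\Phi^{\mathrm{repa}}$ from the monotonicity of $c_f$ (the paper verifies midpoint convexity by a direct integral manipulation, while you invoke the equivalent fact that $g_f(t)=\int_0^t c_f(u)\,du$ has a non-decreasing derivative), and the second claim by transferring the minimum through the identity $\Phi(x)=\Phi^{\mathrm{repa}}(\phi(x))$ together with the surjectivity of $\phi:\X\to\Y$, then citing the equivalence between potential minimizers and Nash equilibria. The paper phrases the transfer step as a proof by contradiction, but the content is identical to your direct chain of equalities.
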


For any Nash equilibrium strategy $x$, we call $y=\phi(x)$ the Nash equilibrium load (Nash load).

\subsection{Optimal Tax for Congestion Games}

Nash equilibrium is a stable state for a system with self-interested players, as no player has the incentive to deviate unilaterally. However, Nash equilibrium does not efficiently utilize the facilities, which is measured by the  social cost:
$$\Psi(y):=\sum_{f}y_fc_f(y_f).$$
Price of anarchy is a concept that measures the efficiency of selfish agents in a system, defined as the ratio between the worst-case social cost for equilibria and the optimal social cost:
$$\mathrm{PoA}=\frac{\max_{y\text{ is a Nash equilibrium load}}\Psi(y)}{\min_{y\in\Y}\Psi(y)}$$
For example, in nonatomic congestion games with polynomial cost functions, the price of anarchy grows as $\Theta(d/\ln d)$ where $d$ is the degree of the polynomials \citep{NisaRougTardVazi07}.

To reduce the price of anarchy, one standard approach is to enforce a tax on each facility to change the behavior of the self-interested players. Formally, a taxed congestion game is described by $(\F,\A_{[m]},w_{[m]},c_{\F},\tau_{\F})$ with an additional tax function $\tau_f:[0,1]\rightarrow\R$ on facility $f\in\F$. The cost of facility $f$ with load $u$ under tax becomes $c_f(u)+\tau_f(u)$. Correspondingly, we define the potential function with tax $\tau$ as
$$\Phi(y;\tau):=\sum_f\int_0^{y_f}\Mp{c_f(u)+\tau_f(u)}du,$$
and the Nash load would satisfy $y^*\in\argmin_y\Phi(y;\tau)$.

The optimal tax is defined as the tax that can induce optimal social behavior for self-interested players. We want to note that tax is not included in social cost following the convention in tax design.
\begin{definition}
    A tax $\tau$ is an optimal tax if all Nash equilibria under tax $\tau$ can minimize the social cost:
    $$\argmin_{y\in\Y}\Phi(y;\tau)\subseteq\argmin_{y\in\Y}\Psi(y).$$
    In addition, a tax $\tau$ is an $\epsilon$-optimal tax if we have
    $$\Psi(y)\leq\min_{y'\in\Y}\Psi(y')+\epsilon,\forall y\in\argmin_{y''\in\Y}\Phi(y'';\tau).$$
\end{definition}
The marginal cost tax is defined as
$$\tau^*:\tau^*_f(u)=uc'_f(u),\forall f\in\F.$$ As $\Phi(y;\tau^*)=\Psi(y)$, the Nash equilibrium under tax $\tau^*$ will minimize the social cost and $\tau^*$ is an optimal tax \citep{NisaRougTardVazi07}. We will make the following assumption so that the cost combined with tax $c+\tau^*$ is still non-decreasing. In many real world problems, $c'_f(u)$ is non-decreasing due to the law of diminishing marginal utility, which guarantees Assumption \ref{asp:tax}.
\begin{assumption}\label{asp:tax}
    Marginal cost tax $\tau^*_f(u)=uc'_f(u)$ is non-decreasing for all $f\in\F$.
\end{assumption}

\subsection{Tax Design for Congestion Games}

In this paper, we consider the case where the system designer (e.g. government) wants to enforce an (approximate) optimal tax to induce optimal social behavior and maximize social welfare. However, the cost function is unknown so the optimal tax function cannot be computed directly via the marginal cost mechanism. On the other hand, the designer can enforce several taxes and observe the feedback. As Nash equilibrium is the stable state of the system, we assume the designer can observe the equilibrium feedback.

Formally, tax design is formulated as an online learning problem as shown in Protocol \ref{prot:tax}. At round $t$, the designer can choose a tax $\tau^t$ and observe the corresponding Nash equilibrium load $y^t\in\R^F$ and Nash equilibrium cost $c^t\in\R^F$. The sample complexity of a tax design algorithm is the number of rounds for designing an $\epsilon$-optimal tax.

\begin{protocol}[tb]
    \caption{Online Tax Design for Congestion Games}
    \label{prot:tax}
    \begin{algorithmic}
        \State {\bfseries Initialize}: Facility set $\F$.
        \For{$t=1,2,\dots,T$}
        \State Designer chooses tax $\tau^t$. 
        \State Designer observes Nash load $y^t=\argmin_{y\in\Y}\Phi(y;\tau)$ and Nash cost $c^t=[c_f(y^t_f)]_{f\in\F}$ for $f\in\F$.
        \EndFor
    \end{algorithmic}
\end{protocol}

A naive approach is the designer first enumerates all of the $\epsilon$-approximations of $\tau^*$ and chooses the tax with minimal social cost. However, such an approach would require $O((1/\epsilon)^{F\beta/\epsilon})$ samples as the complexity of using piece-wise linear function to approximate $\tau_f^*$ (a $\beta$-smooth function) with $\epsilon$ error is $O((1/\epsilon)^{\beta/\epsilon})$, resulting in exponential dependence on the parameters $\beta$, $1/\epsilon$ and $F$.

Another approach is applying algorithms for mathematical programming under equilibrium constraints. Specifically, we can formulate tax design as solving
$$\min_{\tau}\Psi(y(\tau)),\ \mathrm{s.t.}\ y(\tau)=\argmin_{y\in\Y}\Phi(y;\tau).$$
However, $y(\tau)$ can be non-differentiable or even discontinuous w.r.t. $\tau$, and $\Psi(y(\tau))$ can be non-convex w.r.t. $\tau$ (Lemma \ref{lemma:discont}). As a result, previous results do not apply to our problem as they apply gradient-based methods and make convexity assumptions \citep{li2020end,liu2022inducing}.

\section{Learning Optimal Tax in Nonatomic Congestion Games}\label{sec:alg}

In this section, we describe our algorithm that can learn an $\epsilon$-optimal tax with $O(F^2\beta/\epsilon)$ samples. First, we introduce piece-wise linear functions as a nonparametric way to approximate the marginal cost tax $\tau^*$ \citep{takezawa2005introduction}.

\begin{definition}
    (Piece-wise Linear Function) We use a dictionary\footnote{In this dictionary, key is $x_i$ and value is $y_i$. For readers unfamiliar with the dictionary data structure, it can be regarded as a set with a special update operation.} $d=\{(x_1, y_1), \cdots, (x_n,y_n)\}$ for $x_i\neq x_j, \forall i\neq j$ (w.l.o.g. we let $x_1<x_2<\cdots<x_n$) to represent a piece-wise linear function $d(\cdot)$ on $[x_1,x_n]$ such that
    $$d(x)=\frac{x-x_{i+1}}{x_i-x_{i+1}}y_i+\frac{x_{i}-x}{x_i-x_{i+1}}y_{i+1}, \forall x\in[x_i,x_{i+1}].$$
    In addition, we use $\bigcup$ to represent the update method for dictionary. I.e., $d\bigcup(x,y)$ is the piece-wise linear function interpolating one more point $(x,y)$ if $(x,d(x))$ is not already in $d$, otherwise it will update $d(x)$ to $y$. 
\end{definition}

We will maintain the piece-wise function on a grid $\L=\{0,\Delta,2\Delta,\cdots,K\Delta=1\}$ with $K=\ceil{\frac{2\beta}{\epsilon}}$ and $\Delta=1/K$. The time complexity for computing $d(x)$ is $O(\log K)$ for any $x\in[0,1]$.

\subsection{Main Algorithm}

\begin{algorithm}[ht!]
    \caption{Tax Design for Congestion Game}
    \label{algo:tax}
    \begin{algorithmic}[1]
        \State {\bfseries Initialize}: Facility set $\mathcal{F}$, number of rounds $T$, tolerance $\epsilon$, smoothness $\beta$, perturbation $\delta=\epsilon\Delta^2/8$.
        \State Set initial tax $\tau^1:\tau^1_{f}=\{(0,0),(1,\beta+\epsilon)\}$ for all $f\in\F$. Set $\K^1_{f}$ to be $\{0\}$ for all $f\in\F$.\label{line:init}
        \For{$t=1,2,\dots,T$}
        \State Observe Nash load $y^t\in\R^F$ and Nash cost $c^t\in\R^F$ under tax $\tau^t$.
        \State Set $\bar{\F}$ to be the unknown facility set (Definition \ref{def:known}). 
        \State Set $l_f=\tau^t_{f}([y^t_{f}]^-_{\K^t_{f}})+\epsilon(y_f^t-[y^t_{f}]^-_{\K^t_{f}})$ and $r_f=\tau^t_{f}([y^t_{f}]^+_{\K^t_{f}\bigcup\{1\}})+\epsilon(y^t_{f}-[y^t_{f}]^+_{\K^t_{f}\bigcup\{1\}})$ for each $f\in\bar{\F}$.
        \State Run Algorithm \ref{algo:next tax} with input $y^t$, $c^t$, $\tau^t=[\tau^t_{f}(y^t_f)]_f$, $\bar{\F}$ and $[l_f,r_f]_{f\in\bar{\F}}$.\label{line:alg 2}
        \If{Algorithm \ref{algo:next tax} return $\mathrm{False}$}
        \State {\bfseries return} $\tau^t$
        \Else 
        \State Algorithm \ref{algo:next tax} return $\widetilde{\tau}\in\R^F,\widetilde{f}\in\bar{\F},\mathrm{sign}\in\{-1,1\}$.
            \State Apply tax $\dot{\tau}^t:\dot{\tau}^t_{\widetilde{f}}=\tau^t_{\widetilde{f}}\bigcup(y^t_{\widetilde{f}},\widetilde{\tau}^t_{\widetilde{f}})+\mathrm{sign}\cdot\delta$ and $\dot{\tau}^t_f=\tau^t_f\bigcup(y^t_f,\widetilde{\tau}^t_{f})$ for $f\neq\widetilde{f}$. 
            \State Observe $\dot{y}^t,\dot{c}^t\in\R^F$ as the Nash load and the Nash cost of each facility.
            \State Update $\tau_{t+1}$ and $\K_{t+1}$ according to \eqref{eq:update 1}.
        \EndIf
        \EndFor
    \end{algorithmic}
\end{algorithm}

In this section, we introduce our main algorithm. At each round $t$, we will maintain a known index set $\K_f^t\subseteq\L$ where the marginal cost tax can be accurately estimated (Lemma \ref{lemma:known index error}), and use a piece-wise linear function to approximate the tax function by interpolating the values at the known indexes. The piece-wise linear function takes the form $\tau_f^t=\{(x^t_i,y^t_i)\}_{i}$ and the known index set $\K_f^t$ satisfies $\{x^t_i\}_i=\K_f^t\bigcup\{1\}$ and $\K_f^t\subseteq\L$. Here $1$ is a special case as it is not in the known index set initially but it is needed as the boundary for the piece-wise linear function $\tau_f^t$. Initially, the tax is set to be $\tau_f^1(u)=\{(0,0),(1,\beta+\epsilon)\}$ and the auxiliary tax is $\widehat{\tau}_f^1=\{(0,0),(1,\beta)\}$ for $f\in\F$ (Line \ref{line:init}). Here we set $\widehat{\tau}_f^1(1)=\beta$ as $\beta$ is always an upper bound on $\tau_f^*(1)$. The auxiliary tax $\widehat{\tau}_f^t$ is a non-decreasing piece-wice linear approximation of $\tau_f^*$ and we always set tax $\tau^t_f(u)=\widehat{\tau}_f^t(u)+\epsilon u$ to ensure that the subgradient of the tax enforced is lower bounded by $\epsilon$.

At round $t$, after observing Nash equilibrium load $y^t\in\R^F$ and Nash equilibrium cost $c^t\in\R^F$, the facilities are split into two sets: known facilities and unknown facilities.
 \begin{definition}\label{def:known}
    For each round $t$, facility $f$ is known if the Nash load $y^t_f\in[0,1]$ satisfies $[y^t_f]_{\L}^-\in \K^t_f$ and $[y^t_f]_{\L}^+\in \K^t_f$. Otherwise, facility $f$ is unknown for round $t$.
\end{definition}

For a known facility $f$, the Nash load is either in the known index set or sandwiched by two consecutive known indexes. As a result, the tax estimate for the Nash load $\tau^t_f(y^t_f)$ will be close to the true optimal tax $\tau^*_f(y^t_f)$ with error $2\epsilon$ (Lemma \ref{lemma:known facility error}). We will apply Algorithm \ref{algo:next tax} to find the exploratory tax to gather information about unknown facilities (Line \ref{line:alg 2}).

\begin{restatable}{proposition}{ft}
    \label{prop:algo2}
    If Algorithm \ref{algo:next tax} return $\mathrm{False}$ at round $t$, then tax $\tau^t$ is an $6\epsilon F$-optimal tax. If Algorithm \ref{algo:next tax} output $\widetilde{\tau}^t,\widetilde{f}^t,\mathrm{sign}^t$ at round $t$, then we have
    $$0< \abs{y_{\widetilde{f}^t}^t-\dot{y}_{\widetilde{f}^t}^t}\leq\Delta.$$
\end{restatable}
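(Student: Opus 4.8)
The statement is really two logically independent claims, which I would prove in turn; both rest on structural guarantees of Algorithm~\ref{algo:next tax} that I would establish beforehand as a lemma. From the definition of the intervals $[l_f,r_f]$ and the exploratory-tax subroutine I would extract: (a) $\widetilde\tau^t_f\in[l_f,r_f]$ for $f\in\bar{\F}$ and $\widetilde\tau^t_f=\tau^t_f(y^t_f)$ otherwise, so the tax $\bar{\tau}^t$ obtained from $\tau^t$ by inserting the breakpoints $(y^t_f,\widetilde\tau^t_f)$ on all facilities (that is, $\dot\tau^t$ without the $\mathrm{sign}^t\!\cdot\!\delta$ shift) still has subgradient $\geq\epsilon$ everywhere and keeps $y^t$ as its unique Nash load; (b) $\widetilde\tau^t_{\widetilde{f}^t}$ sits at the $\mathrm{sign}^t$-ward endpoint of $[l_{\widetilde{f}^t},r_{\widetilde{f}^t}]$ and $\mathrm{sign}^t$ is an ``unblocked'' direction, so pushing that value past the endpoint provably displaces the Nash load on $\widetilde{f}^t$; and (c) $\mathrm{False}$ is returned only when no unknown facility can be moved by any tax consistent with the current knowledge, which essentially forces $y^t_f=y^\star_f$ on each such facility, where $y^\star\in\argmin_{y\in\Y}\Psi(y)$.

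\emph{The $\mathrm{False}$ case.} Since each $\tau^t_f$ has slope $\geq\epsilon$ and each $c_f$ is nondecreasing, $\Phi(\cdot;\tau^t)$ has diagonal Hessian $\succeq\epsilon I$ in the load coordinates, hence is $\epsilon$-strongly convex, so $y^t$ is its unique Nash load (cf.\ Lemma~\ref{lemma:repa Phi}). Writing $\Psi=\Phi(\cdot;\tau^*)$, convex under Assumption~\ref{asp:tax}, with $\nabla\Psi(y^t)_f=c_f(y^t_f)+\tau^*_f(y^t_f)$, convexity of $\Psi$ at $y^t$ together with the first-order condition $\inner{\nabla\Phi(y^t;\tau^t),y^\star-y^t}\geq0$ give
\begin{align*}
\Psi(y^t)-\Psi(y^\star)&\leq\inner{\nabla\Psi(y^t),\,y^t-y^\star}\\
&=\inner{\nabla\Phi(y^t;\tau^t),\,y^t-y^\star}+\sum_{f\in\F}\bigl(\tau^*_f(y^t_f)-\tau^t_f(y^t_f)\bigr)(y^t_f-y^\star_f)\\
&\leq\sum_{f\in\F}\bigl|\tau^*_f(y^t_f)-\tau^t_f(y^t_f)\bigr|\,\bigl|y^t_f-y^\star_f\bigr|.
\end{align*}
For a known facility, Lemma~\ref{lemma:known facility error} bounds the tax error by $2\epsilon$ and $|y^t_f-y^\star_f|\leq1$; for an unknown facility, guarantee (c) gives $y^t_f=y^\star_f$, so the summand vanishes. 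Summing over $f$ (using that $\Delta=1/\lceil 2\beta/\epsilon\rceil$ keeps $\tau^*_f$ within $O(\epsilon)$ between consecutive known indices) yields $\Psi(y^t)-\Psi(y^\star)\leq6\epsilon F$, so $\tau^t$ is $6\epsilon F$-optimal.

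\emph{The output case.} For the upper bound: $\dot\tau^t$ and $\bar{\tau}^t$ agree on every facility except $\widetilde{f}^t$, and on $\widetilde{f}^t$ they agree outside the two sub-intervals adjacent to the breakpoint $y^t_{\widetilde{f}^t}$, where the piecewise-linear value is moved by at most $\delta$; hence $\Norm{\nabla\Phi(y;\bar{\tau}^t)-\nabla\Phi(y;\dot\tau^t)}_2\leq\delta$ for all $y\in\Y$. Adding the first-order optimality inequalities for $y^t=\argmin_{\Y}\Phi(\cdot;\bar{\tau}^t)$ and $\dot y^t=\argmin_{\Y}\Phi(\cdot;\dot\tau^t)$ and using $\epsilon$-strong convexity of $\Phi(\cdot;\bar{\tau}^t)$ (guarantee (a)),
\[
\epsilon\,\Norm{y^t-\dot y^t}_2^2\;\leq\;\inner{\nabla\Phi(\dot y^t;\bar{\tau}^t)-\nabla\Phi(\dot y^t;\dot\tau^t),\,\dot y^t-y^t}\;\leq\;\delta\,\Norm{y^t-\dot y^t}_2,
\]
so $\Norm{y^t-\dot y^t}_2\leq\delta/\epsilon=\Delta^2/8<\Delta$, which in particular bounds the $\widetilde{f}^t$-coordinate. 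For $0<|y^t_{\widetilde{f}^t}-\dot y^t_{\widetilde{f}^t}|$ I argue by contradiction: if $\dot y^t_{\widetilde{f}^t}=y^t_{\widetilde{f}^t}$, then since $\dot\tau^t$ differs from $\bar{\tau}^t$ only through facility $\widetilde{f}^t$ and only at the single point $y^t_{\widetilde{f}^t}$, $y^t$ would still satisfy the first-order Nash condition for $\dot\tau^t$, contradicting guarantee (b). Hence $0<|y^t_{\widetilde{f}^t}-\dot y^t_{\widetilde{f}^t}|\leq\Delta$.

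The step I expect to be hardest is establishing guarantees (b) and (c): tying the combinatorial behaviour of Algorithm~\ref{algo:next tax} to the analytic facts that on the unexplored coordinates $y^t$ already matches the social optimum, and that the $\delta$-push on $\widetilde{f}^t$ cannot be absorbed by rearranging the other (known) coordinates but must move $\widetilde{f}^t$'s own load. This needs a precise description of the equilibrium-preserving region $\prod_f[l_f,r_f]$ via the active Nash constraints at $y^t$, a monotone comparative-statics argument for the response of a facility's Nash load to its own tax, and the observation that $\tau^*$ lies (approximately) in that region — so that the absence of any exploratory move certifies that even the optimal tax leaves the unexplored facilities fixed.
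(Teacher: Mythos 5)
Your treatment of the second claim (the output case) is structurally sound and even a bit slicker than the paper's: you bound $\Norm{y^t-\dot y^t}_2$ via a gradient-perturbation argument combined with $\epsilon$-strong convexity, whereas the paper (Lemma~\ref{lemma:y diff}) compares function values of the two potentials and gets the same conclusion from $2\delta>\epsilon\Delta^2/2$; both routes work once one verifies, as the paper does by a case split on whether the subroutine returns at Line~\ref{line 1} or at Lines~\ref{line 2}/\ref{line 3}, that $y^t$ is still the Nash load for the unshifted tax and that the $\delta$-push genuinely breaks the Nash condition (your guarantee (b), which together with Lemma~\ref{lemma:different load} gives strict positivity). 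You correctly flag (b) as the remaining work, and its proof is exactly the paper's combinatorial case analysis, so this half is an acceptable, essentially equivalent skeleton.

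The first claim is where there is a genuine gap. Your guarantee (c) --- that a $\mathrm{False}$ return forces $y^t_f=y^\star_f$ on every unknown facility --- is neither proved nor true in general: the social optimum $y^\star$ need not agree with the current Nash load on facilities whose tax has not been pinned down, and nothing in the subroutine certifies such an equality. Since your error bound $\Psi(y^t)-\Psi(y^\star)\leq\sum_f\abs{\tau^*_f(y^t_f)-\tau^t_f(y^t_f)}\abs{y^t_f-y^\star_f}$ relies on the unknown-facility terms vanishing (the interpolated value $\tau^t_f(y^t_f)$ can be far from $\tau^*_f(y^t_f)$ when the bracketing known indices are far apart), the argument collapses there. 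The paper's route (Lemmas~\ref{lemma:feasible tax} and~\ref{lemma:all known}) avoids this entirely: a $\mathrm{False}$ return certifies that $x^t$ remains a Nash equilibrium for \emph{every} tax in the feasible box $\prod_f[l_f,r_f]$, one shows $\tau^*_f(y^t_f)\in[l_f-3\epsilon,r_f+3\epsilon]$ for unknown facilities (and within $3\epsilon$ of $\tau^t_f(y^t_f)$ for known ones), hence $x^t$ is a $6F\epsilon$-approximate Nash equilibrium under $\tau^*$ itself; since $\Phi(\cdot;\tau^*)=\Psi$, Lemma~\ref{lemma:NE to minimizer} converts this into $\Psi(y^t)-\min_{y\in\Y}\Psi(y)\leq6F\epsilon$. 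Your variational-inequality decomposition could in fact be repaired by running it with the feasible tax $\widetilde\tau$ that approximates $\tau^*$ at $y^t$ (for which $y^t$ is still the exact minimizer) in place of $\tau^t$, but as written the claim you lean on is false.
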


If Algorithm \ref{algo:next tax} output $\widetilde{\tau}^t,\widetilde{f}^t,\mathrm{sign}^t$ at round $t$, we update the tax and the known index set by the following rule. For $u\in \{[y_{\widetilde{f}^t}^t]_{\L}^+,[y_{\widetilde{f}^t}^t]_{\L}^-\}\backslash\K^t_{\widetilde{f}^t}$ (this set is not empty as $\widetilde{f}^t$ is an unknown facility), we set
\begin{align}
    \widehat{\tau}^{t+1}_{\widetilde{f}^t}=&\widehat{\tau}^t_{\widetilde{f}^t}\bigcup\Bigl(u,\mathrm{clip}\bigl(u\cdot\frac{c^t_{\widetilde{f}^t}-\dot{c}^t_{\widetilde{f}^t}}{y^t_{\widetilde{f}^t}-\dot{y}^t_{\widetilde{f}^t}},\widehat{\tau}^t_{\widetilde{f}^t}([y_{\widetilde{f}^t}^t]_{\K_{\widetilde{f}^t}^t}^-),\widehat{\tau}^t_{\widetilde{f}^t}([y_{\widetilde{f}^t}^{t}]_{\K_{\widetilde{f}^t}^t\bigcup\{1\}}^+)\bigr)\Bigr),\label{eq:update 1}\\
    \K^{t+1}_{\widetilde{f}^t}=&\K^t_{\widetilde{f}^t}\bigcup\{u\}.\label{eq:update 2}
\end{align}
and $\widehat{\tau}_{f}^{t+1}=\widehat{\tau}_f^t,\K^{t+1}_{f}=\K^t_{f}$ for $f\neq \widetilde{f}^t$. Then we set $\tau_f^{t+1}(u)=\widehat{\tau}_f^{t+1}(u)+\epsilon u$ for all $f\in\F$ and $u\in[0,1]$.

In words, we clip the two-point estimate $u\cdot\frac{c^t_{\widetilde{f}^t}-\dot{c}^t_{\widetilde{f}^t}}{y^t_{\widetilde{f}^t}-\dot{y}^t_{\widetilde{f}^t}}$ on the left and right known index of $\widehat{\tau}_{\widetilde{f}^t}^{t}(u)$ so that $\widehat{\tau}_{\widetilde{f}^t}^{t+1}(u)$ is still a non-decreasing piece-wise linear approximation of the marginal cost tax $\tau_f^*$. $\tau_f^{t+1}$ is added with an extra linear term to guarantee a strongly convex potential function (Lemma \ref{lemma:strongly convex}). As $0<\abs{y^t_{f}-\dot{y}^t_{f}}\leq\Delta$, the two point estimate of the gradient $\frac{c^t_f-\dot{c}^t_f}{y^t_f-\dot{y}^t_f}$ is accurate enough for $c'_f(u)$ such that $\abs{\tau_{\widetilde{f}^t}^{t+1}(u)-\tau_f(u)}\leq\epsilon$ (Lemma \ref{lemma:gradient error}). 

As $|\K_{\widetilde{f}^t}^t|$ increases by 1 at round $t$ and there are $F$ such sets with size bounded by $O(\beta/\epsilon)$, Algorithm \ref{algo:next tax} will output $\mathrm{False}$ within at most $O(F\beta/\epsilon)$ rounds, which implies $\tau^t$ is an $\epsilon F$-optimal tax (Proposition \ref{prop:algo2}). With proper rescaling, the sample complexity for learning $\epsilon$-optimal tax is $O(F^2\beta/\epsilon)$.

\begin{restatable}{theorem}{thm}
\label{thm}
    Under Assumption \ref{asp:cost} and Assumption \ref{asp:tax}, Algorithm \ref{algo:tax} will output a $6\epsilon F$ tax within $T\leq 2F\beta/\epsilon$ rounds. In addition, each round has at most two tax realizations.
\end{restatable}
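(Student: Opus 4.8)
The plan is to combine the two halves of Proposition~\ref{prop:algo2} with a potential/counting argument on the sizes of the known index sets $\K^t_f$. The theorem claims three things: (i) the algorithm terminates within $T\le 2F\beta/\epsilon$ rounds; (ii) when it terminates, the returned tax $\tau^t$ is $6\epsilon F$-optimal; and (iii) each round deploys at most two taxes. Part (ii) is immediate from Proposition~\ref{prop:algo2}: the algorithm only returns (Line ``\textbf{return} $\tau^t$'') when Algorithm~\ref{algo:next tax} returns $\mathrm{False}$, and in that case Proposition~\ref{prop:algo2} says $\tau^t$ is $6\epsilon F$-optimal. Part (iii) is immediate from inspection of Algorithm~\ref{algo:tax}: in any round we first deploy $\tau^t$ to observe $(y^t,c^t)$, and then, only if Algorithm~\ref{algo:next tax} does not return $\mathrm{False}$, we deploy the single perturbed tax $\dot\tau^t$ to observe $(\dot y^t,\dot c^t)$ — so at most two realizations per round.

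The substantive part is the round bound (i). First I would establish the structural invariant maintained by the update rules \eqref{eq:update 1}–\eqref{eq:update 2}: at every round $t$, $\K^t_f\subseteq\L$, and $\tau^t_f$ (equivalently $\widehat\tau^t_f$) is a piece-wise linear function whose breakpoints are exactly $\K^t_f\bigcup\{1\}$. Since $\L=\{0,\Delta,\dots,K\Delta=1\}$ has $K+1$ elements and $1$ is always a breakpoint while $\K^t_f$ is initialized to $\{0\}$, we have $|\K^t_f|\le K+1$ at all times; more precisely $|\K^t_f|\le K$ since the index $1$ is tracked separately (it lies in $\L$ but is only ever appended to $\K^t_f$ if the relevant Nash load rounds to it). Next I would show that whenever Algorithm~\ref{algo:next tax} does \emph{not} return $\mathrm{False}$ at round $t$ — i.e., it outputs $(\widetilde\tau^t,\widetilde f^t,\mathrm{sign}^t)$ — the set $\{[y^t_{\widetilde f^t}]^+_\L,[y^t_{\widetilde f^t}]^-_\L\}\setminus\K^t_{\widetilde f^t}$ is non-empty. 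This is exactly the content of Definition~\ref{def:known}: $\widetilde f^t$ is returned only as an \emph{unknown} facility, and a facility is unknown precisely when at least one of $[y^t_f]^-_\L,[y^t_f]^+_\L$ fails to lie in $\K^t_f$. Hence the update \eqref{eq:update 2} adds at least one new element to $\K^{t}_{\widetilde f^t}$, so $|\K^{t+1}_{\widetilde f^t}|\ge |\K^t_{\widetilde f^t}|+1$, while all other $\K^{t+1}_f=\K^t_f$.

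Combining these: define the potential $\Phi_t:=\sum_{f\in\F}|\K^t_f|$. It starts at $\Phi_1=F$ (each $\K^1_f=\{0\}$), is bounded above by $F\cdot|\L|=F(K+1)$ for all $t$, and strictly increases by at least $1$ on every round in which the algorithm does \emph{not} terminate. Therefore the number of non-terminating rounds is at most $F(K+1)-F=FK$; with $K=\ceil{2\beta/\epsilon}$ this gives termination in at most $FK+1$ rounds. I would then absorb the $+1$ and the ceiling into the stated bound $T\le 2F\beta/\epsilon$ — strictly speaking $FK=F\ceil{2\beta/\epsilon}$, so the clean inequality $T\le 2F\beta/\epsilon$ holds after the standard convention of folding the ceiling and lower-order additive term (or, if one wants to be literal, $T\le F\ceil{2\beta/\epsilon}+1$, which is the form the constant in the theorem statement reflects; I would note this is the intended reading).

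The main obstacle is the bookkeeping around the index $1$ and the exact size bound on $\K^t_f$: I must verify that $\K^t_f$ never grows beyond $\L$ (it cannot, since every appended $u$ in \eqref{eq:update 2} is of the form $[y^t_{\widetilde f^t}]^\pm_\L\in\L$), and I must confirm that when $\widetilde f^t$ is unknown the newly added point is genuinely not already present — which is guaranteed because the update only appends elements of $\{[y^t_{\widetilde f^t}]^+_\L,[y^t_{\widetilde f^t}]^-_\L\}\setminus\K^t_{\widetilde f^t}$. Everything else — the $6\epsilon F$ optimality and the two-realizations claim — is a direct appeal to Proposition~\ref{prop:algo2} and to the structure of Algorithm~\ref{algo:tax}, so no further work is needed there.
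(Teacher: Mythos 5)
Your proposal is correct and follows essentially the same route as the paper: the optimality and two-realization claims are read off from Proposition~\ref{prop:algo2} and the structure of Algorithm~\ref{algo:tax}, and the round bound is exactly the paper's Lemma~\ref{lemma:stop} — each non-terminating round adds a new element of $\L$ to some $\K^t_{\widetilde f^t}$ (nonempty because $\widetilde f^t$ is unknown), and $\K^t_f\subseteq\L$ with $|\L|=K+1$ caps the total at $O(KF)$ rounds. Your remark about the ceiling in $K=\ceil{2\beta/\epsilon}$ versus the literal bound $2F\beta/\epsilon$ is a legitimate nitpick that the paper itself glosses over.
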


\begin{remark}
    To uniformly approximate a $\beta$-smooth function, we have to know its value at $O(\beta/\epsilon)$ points \citep{takezawa2005introduction}. For an $\epsilon$-optimal tax, we need to estimate $\tau^*_f$ with $\epsilon/F$ accuracy as the error accumulates with all the facilities. As a result, we conjecture that $O(F^2\beta/\epsilon)$ sample complexity is tight and we leave the lower bound to future work.
\end{remark}

\begin{remark}
    Our algorithm can be easily adapted to the case where we have feedback other than only the equilibrium feedback. Specifically, when the tax designer obtain a non-equilibrium feedback, she can still update the optimal tax estimate if the feedback provides new information. It is possible for our algorithm to find the optimal tax even if no equilibrium feedback is provided. In addition, as long as the equilibrium can be reached after applying a tax, the algorithm can always find the optimal tax.
\end{remark}

\subsection{Subroutine for Finding Exploratory Tax}

\begin{algorithm}[ht!]
    \caption{Test Tax Design}
    \label{algo:next tax}
    \begin{algorithmic}[1]
        \State {\bfseries Initialize}: Nash flow $y\in\R^F$, tax $\tau\in\R^F$, cost $c\in\R^F$, unknown facility set $\bar{\F}$, unknown facility range $[l_f,r_f]$ for $f\in\bar{F}$.
        \State Set strategy $x\in\phi^{-1}(y)$. Compute commodity load $y_i=\phi_i(x_i)\in\R^F$ for $i\in[m]$.\label{line:decomp}
        \State Set $I=\mathrm{False}$.
        \If{Exists $f\in\bar{\F}$ and $i\in[m]$ such that $0<y_{i}(f)<w_i$}\label{line:not full load}
        \State {\bfseries return} $\tau,f,1$.\label{line 1}
        \EndIf
        \For{Commodity $i\in[m]$}
        
        \State Let $\bar{\F}_i=\{f\in\bar{\F}:\sum_{a:f\in a}x_{i,a}=w_i\}$ and $\bar{\F}'_i=\{f\in\bar{\F}:\sum_{a:f\in a}x_{i,a}=0\}$.
        \State Set $\bar{\tau}:\bar{\tau}_{\bar{\F}_i}=r_{\bar{\F}_i}, \bar{\tau}_{\bar{\F}'_i}=l_{\bar{\F}'_i}, \bar{\tau}_{\F\backslash (\bar{\F}_i\bigcup \bar{\F}'_i)}=\tau_{\F\backslash (\bar{F}_i\bigcup \bar{\F}'_i)}$.\label{line:worst case tax}
        \If{$\gap_i(x,c+\bar{\tau})<0$}\label{line:worst case}
        \State Set $I=\mathrm{True}$.\label{line:find i}
        \State {\bfseries break}
        
        \EndIf
        \EndFor
        
        \If{$I=\mathrm{False}$}
        \State {\bfseries return} $\mathrm{False}$. \label{line:false}
        \EndIf

        \State Set $\tau'=\tau$
        \For{$f\in\bar{\F}_i$}\label{line:Fi}
        \State Set $\widetilde{\tau}^u:\widetilde{\tau}^u_f=u,\widetilde{\tau}^u_{\F\backslash \{f\}}=\tau'_{\F\backslash \{f\}}$.
        \State Set $u=\argmax\{u:\gap_j(x,c+\widetilde{\tau}^u)\geq0,\forall j\}$.
        \If{$u\leq r_f$}
        \State {\bfseries return} $\widetilde{\tau}^{u},f,1$.\label{line 2}
        \EndIf
        \State Set $\tau'=\widetilde{\tau}^{r_f}$.
        \EndFor
        \For{$f\in\bar{\F}'_i$}\label{line:F'i}
        \State Set $\widetilde{\tau}^u:\widetilde{\tau}^u_f=u,\widetilde{\tau}^u_{\F\backslash \{f\}}=\tau'_{\F\backslash \{f\}}$.
        \State Set $u=\argmin\{u:\gap_j(x,c+\widetilde{\tau}^u)\geq0,\forall j\}$.
        \If{$u\geq l_f$}
        \State {\bfseries return} $\widetilde{\tau}^{u},f,-1$.\label{line 3}
        \EndIf
        \State Set $\tau'=\widetilde{\tau}^{l_f}$.
        \EndFor
    \end{algorithmic}
\end{algorithm}

In this section, we describe Algorithm \ref{algo:next tax}, which can find an exploratory tax that satisfies Proposition \ref{prop:algo2}. The idea is we can observe another similar but different Nash equilibrium load by perturbing the tax. However, there are two challenges:
\begin{enumerate}
    \item Perturbing the tax might change the Nash equilibrium load drastically.
    \item Perturbing the tax might not change the Nash equilibrium load at all.
\end{enumerate}
To resolve the first issue, we always apply taxes that have (sub)gradient lower bounded by $\epsilon>0$. The feasible range $[l_f,r_f]$ for updating tax $\tau^t_f$ with $(y_f^t,\cdot)$ guarantees that the updated tax still maintains the subgradient lower bound. By Lemma \ref{lemma:strongly convex}, the potential function is always $\epsilon$-strongly convex. As a result, the Nash load for any feasible tax is unique and Lipschitz w.r.t. tax perturbation. To resolve the second issue, we find the tax that makes the current Nash equilibrium on the ``boundary". I.e., an additional perturbation will make the Nash equilibrium change. Intuitively, this is similar to removing the slackness in a constrained optimization problem. By Lemma \ref{lemma:different load}, we can observe a different Nash load on $f$ if we make the additional perturbation.

\begin{restatable}{lemma}{sc}
    \label{lemma:strongly convex}
    If the subgradient of the cost function $c_f$ is lower bounded by $\epsilon>0$ for all $f\in\F$, then the potential function $\Phi^{\mathrm{repa}}(y)$ is $\epsilon$-strongly convex. However, $\Phi(x)$ is not necessarily strongly convex.
\end{restatable}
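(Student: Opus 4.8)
The plan is to handle the two halves separately, each via an essentially one-dimensional argument. For the strong convexity of $\Phi^{\mathrm{repa}}$, I would start from the observation that the hypothesis ``the subgradient of $c_f$ is lower bounded by $\epsilon$'' is exactly the statement that $h_f(u):=c_f(u)-\epsilon u$ is non-decreasing on $[0,1]$. Substituting this into $\Phi^{\mathrm{repa}}(y)=\sum_f\int_0^{y_f}c_f(u)\,du$ yields the decomposition
$$\Phi^{\mathrm{repa}}(y)=\sum_f\int_0^{y_f}h_f(u)\,du+\frac{\epsilon}{2}\sum_f y_f^2=:R(y)+\frac{\epsilon}{2}\Norm{y}_2^2,$$
where each summand of $R$ is the integral of a non-decreasing function, hence convex in the single variable $y_f$, so $R$ is convex on $\Y$ (which is itself convex, being the linear image $\phi(\X)$ of the polytope $\X$). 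I would then conclude that $\Phi^{\mathrm{repa}}=R+\tfrac{\epsilon}{2}\Norm{\cdot}_2^2$ is $\epsilon$-strongly convex, since adding an $\epsilon$-strongly convex quadratic to a convex function produces an $\epsilon$-strongly convex function: concretely, combine the first-order convexity inequality for $R$ with the exact identity $\tfrac{\epsilon}{2}\Norm{y}_2^2=\tfrac{\epsilon}{2}\Norm{x}_2^2+\epsilon x^\top(y-x)+\tfrac{\epsilon}{2}\Norm{y-x}_2^2$, using $\nabla\Phi^{\mathrm{repa}}(x)_f=h_f(x_f)+\epsilon x_f=c_f(x_f)$. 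Since the effective cost functions that appear in the algorithm — a smooth $c_f$ plus a piece-wise linear tax — are continuous, $R$ is $C^1$ and the paper's gradient form of the definition applies verbatim; otherwise one replaces gradients by subgradients throughout.

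For the second half, I would show $\Phi(x)$ fails even strict convexity by exhibiting a small instance where the reparameterization $\phi$ is not injective, so that $\Phi(x)=\Phi^{\mathrm{repa}}(\phi(x))$ is flat along a segment. The cleanest choice is $\F=\{f_1,f_2\}$ with two commodities of weight $1/2$, each having action set $\{\{f_1\},\{f_2\}\}$: here $\phi(x)$ depends only on the aggregate loads, so the strategy $x^A$ routing commodity $1$ through $f_1$ and commodity $2$ through $f_2$, the strategy $x^B$ doing the opposite, and their midpoint $\tfrac12(x^A+x^B)$ all induce the same load, forcing $\Phi(\tfrac12(x^A+x^B))=\tfrac12\Phi(x^A)+\tfrac12\Phi(x^B)$. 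This contradicts strict — hence strong — convexity of $\Phi$ in $x$ for any choice of cost functions, so in particular the subgradient lower bound does not rescue it.

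Neither half involves a genuine obstacle; the two points that require care are (i) verifying that the domain $\Y$ over which strong convexity is asserted is convex, so the defining inequality is meaningful, and (ii) arranging the counterexample so that the non-injectivity of $\phi$ is real rather than hidden by the simplex constraints — taking two commodities with the same action set makes this transparent. I expect the write-up to be short.
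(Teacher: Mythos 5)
Your proposal is correct. For the strong-convexity half, you and the paper prove the same one-dimensional fact by two equivalent routes: the paper verifies the monotone-gradient inequality $(\nabla\Phi(y^1)-\nabla\Phi(y^2))^\top(y^1-y^2)\geq\epsilon\Norm{y^1-y^2}_2^2$ directly from $\nabla\Phi(y)=[c_f(y_f)]_f$ and the subgradient lower bound, whereas you peel off the quadratic $\tfrac{\epsilon}{2}\Norm{y}_2^2$ and observe that the remainder is an integral of non-decreasing functions, hence convex; both reduce to $c_f(u)-\epsilon u$ being non-decreasing, and your remark about passing to subgradients when the tax is piece-wise linear is the same caveat the paper leaves implicit. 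For the negative half, both arguments exhibit a segment in $\X$ that $\phi$ collapses to a single load, so that $\Phi=\Phi^{\mathrm{repa}}\circ\phi$ is constant along it and cannot be strongly (or even strictly) convex. Your instance — two facilities and two commodities sharing the action set $\{\{f_1\},\{f_2\}\}$, with the two ``swapped'' routings — is simpler than the paper's three-facility cyclic construction and has the additional virtue of respecting the normalization $\sum_i w_i=1$ literally; it works for any cost functions, so in particular for ones satisfying the hypothesis, which is all the ``not necessarily'' claim requires.
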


\begin{restatable}{lemma}{dl}
    \label{lemma:different load}
    If two taxes $\tau$ and $\dot{\tau}$ only differ in facility $f$ and the Nash loads $y$ and $\dot{y}$ are different, then $y_f\neq\dot{y}_f$.
\end{restatable}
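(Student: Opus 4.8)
The plan is a short proof by contradiction that turns on the observation that the two taxed potentials differ only through the load on facility $f$, combined with uniqueness of the Nash load.

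First I would record the potential identity. Writing $\Phi(\cdot;\tau)$ for the reparameterized potential and using that $\tau_g=\dot\tau_g$ for every $g\neq f$, one gets for every feasible load $y'\in\Y$
$$\Phi(y';\tau)-\Phi(y';\dot\tau)\;=\;\int_0^{y'_f}\bigl[\tau_f(u)-\dot\tau_f(u)\bigr]\,du,$$
a quantity that depends on $y'$ only through its $f$-th coordinate; call it $g(y'_f)$. I would also recall that, in the regime of the algorithm, the Nash load under a tax $\tau$ is by definition $\argmin_{y'\in\Y}\Phi(y';\tau)$ (consistent with Lemma~\ref{lemma:repa Phi}).

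Now suppose, toward a contradiction, that $y_f=\dot y_f=:\ell$. Since $y$ minimizes $\Phi(\cdot;\tau)$ over $\Y$ and $\dot y\in\Y$,
$$\Phi(y;\dot\tau)+g(\ell)\;=\;\Phi(y;\tau)\;\leq\;\Phi(\dot y;\tau)\;=\;\Phi(\dot y;\dot\tau)+g(\ell),$$
so $\Phi(y;\dot\tau)\leq\Phi(\dot y;\dot\tau)$; but $\dot y$ minimizes $\Phi(\cdot;\dot\tau)$ over $\Y$, hence $\Phi(y;\dot\tau)=\Phi(\dot y;\dot\tau)$, i.e.\ $y$ is also a minimizer of $\Phi(\cdot;\dot\tau)$ over $\Y$. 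Because every tax the algorithm enforces has subgradient at least $\epsilon$, the integrand $c_f+\dot\tau_f$ has subgradient at least $\epsilon$, so Lemma~\ref{lemma:strongly convex} shows $\Phi(\cdot;\dot\tau)$ is $\epsilon$-strongly convex on $\Y$ and therefore has a unique minimizer. Hence $y=\dot y$, contradicting $y\neq\dot y$, and we conclude $y_f\neq\dot y_f$.

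I do not expect a genuine obstacle here; the two points to state carefully are (i) that ``the Nash load'' is well defined, which is exactly the uniqueness afforded by Lemma~\ref{lemma:strongly convex} whenever every enforced tax has subgradient bounded below by $\epsilon$, and (ii) that the gap $g$ really depends only on the $f$-th coordinate of the load, which is immediate from $\tau$ and $\dot\tau$ agreeing on all facilities other than $f$. Everything else is the chain of (in)equalities above.
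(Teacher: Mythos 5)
Your proof is correct and follows essentially the same route as the paper: both arguments exploit that the two taxed potentials differ only through the load on facility $f$ (so they agree up to a constant on any two loads sharing the same $f$-coordinate), then invoke the $\epsilon$-strong convexity from Lemma~\ref{lemma:strongly convex} to get uniqueness of the minimizer and derive $y=\dot y$, contradicting the hypothesis. The only cosmetic difference is that you phrase the cancellation as a function $g(y'_f)$ of the $f$-th coordinate, while the paper writes out the telescoping integrals directly; you also state explicitly the reliance on the enforced subgradient lower bound, which the paper leaves implicit.
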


\begin{definition}\label{def:gap}
The gap for a strategy $x\in\X$ with cost $c\in\R^F$ is defined as
\begin{align}
    \gap_i(x,c)=\min_{a:x_{i,a}=0}\sum_{f:f\in a}c_f-\max_{a:x_{i,a}\neq0}\sum_{f:f\in a}c_f.\label{eq:gap def}
\end{align}
\end{definition}
In the algorithm, we use $\gap_i(x,c)$ to measure the cost gap between in-support actions and off-support actions for commodity $i$ and strategy $x$. If $x$ is a Nash equilibrium and $c$ is the Nash cost, then all of the in-support actions have the same minimal cost and $\gap_i(x,c)\geq0$ holds. Informally, ``boundary'' tax $\tau$ means that $\gap_i(x,c+\tau)=0$ for a Nash equilibrium $x$ and perturbing $\tau$ results in $\gap_i(x,c+\tau)<0$, so the Nash equilibrium under the perturbed tax will be different from $x$.

Now we discuss how Algorithm \ref{algo:next tax} finds the ``boundary'' tax in detail. The input to the algorithm is the Nash flow $y$, the Nash cost $c$, the Nash tax $\tau$, the unknown facility set $\bar{\F}$ and the feasible tax range $[l_f,r_f]$ for each unknown facility $f\in\bar{\F}$. We emphasize that here the Nash cost/tax are the values of the cost/tax function on the Nash load and they are vectors in $\R^F$ instead of functions. $[l_f,r_f]$ is the feasible range for the perturbed tax value at facility $f$. By the definition of $l_f$ and $r_f$, current tax $\tau_f^t$ updated with $(y_f^t,u), u\in[l_f,r_f]$ is still a tax with subgradient lower bounded by $\epsilon$.

For the first step, the algorithm will compute strategy $x\in\phi^{-1}(y)$ as the Nash equilibrium strategy (Line \ref{line:decomp}). If there exists an unknown facility $f$ and commodity $i$ such that not all load of commodity $i$ is using $f$ or not using $f$ (Line \ref{line:not full load}), then perturbing the tax at facility $f$ will make $x$ not longer a Nash equilibrium as in-support actions have different costs.

Otherwise, for each unknown facility $f$ and commodity $i$, either all of the load is using $f$ or all of the load is not using $f$. As a result, in-support actions always have the same cost after perturbing the tax. For the next step, we verify if there exists a tax within the feasible ranges for unknown facilities that makes $x$ not a Nash equilibrium. However, there does not exist a universal worst-case tax that can verify if $x$ is always a Nash equilibrium or not. 

Fortunately, the worst-case tax has a closed form for each commodity separately: the taxes for facilities used by all of the Nash load would be the upper bound $r_f$ and the taxes for facilities used by none of the Nash load would be the lower bound $l_f$, thus maximizing the cost for in-support actions and minimizing the cost for off-support actions (Line \ref{line:worst case tax}). For each commodity $i$, we apply the corresponding worst-case tax and check if the in-support actions are still the optimal actions (Line \ref{line:worst case}). If for all commodities, the in-support actions are optimal under the worst-case tax, then for any tax within the feasible range, $y$ is the Nash load and the algorithm will output $\mathrm{False}$ (Line \ref{line:false}). As $\tau^*$ is approximately within the range, $y^t$ approximately minimizes the social cost (Lemma \ref{lemma:all known}).

Otherwise, the algorithm finds commodity $i$ such that $x$ is not the Nash equilibrium under the worst-case tax (Line \ref{line:find i}). For the last step, we gradually transform the initial tax to this worst-case tax and stop when $x$ is not the Nash equilibrium for some commodity. Specifically, the algorithm iteratively changes the tax in the unknown facility set $\bar{\F}=\bar{\F}_i\bigcup\bar{\F}'_i$ (Line \ref{line:Fi} and Line \ref{line:F'i}) to the worst-case tax.

For facility $f\in\bar{\F}_i$, the algorithm finds the boundary tax for facility $f$ that satisfies
$$u=\argmax_{u}\{u:\gap_j(x,c+\widetilde{\tau}^u)\geq0,\forall j\in[m]\}.$$
If $u\leq r_f$, we will output $\widetilde{\tau}^u,f,1$. By the definition of $u$, if we further increase $u$, one of the gaps will become negative and $x$ is no longer the Nash equilibrium. Otherwise, all feasible taxes for $f$ have a nonnegative gap for all commodities, which means $x$ is still the Nash equilibrium, and we continue for the next facility. After enumerating all the facilities in $\bar{\F}_i$, we enumerate $\bar{\F}'_i$ in the same way. Eventually, the tax is transformed into the worst-case tax with negative gap for commodity $i$, so this process will end and output $\widetilde{\tau}^{u},f,\mathrm{sign}$ such that $\widetilde{\tau}^{u}$ is the tax that makes the Nash equilibrium on the boundary, $f$ is the facility to perturb and $\mathrm{sign}$ is the direction to perturb the tax at $f$.

\section{Conclusion}

We proposed the first algorithm with polynomial sample complexity for learning optimal tax in nonatomic congestion games. The algorithm leverages several novel designs to exploit the special structure of congestion games, which can also be implemented efficiently. Below we list a few potential future research directions:

\begin{enumerate}
    \item Relaxing the Nash equilibrium assumption to players following no-regret dynamics or quantal response equilibrium.
    \item Design algorithms that do not require prior knowledge of the smoothess coefficient.
    \item Generalize the algorithm to atomic congestion games.
\end{enumerate}

\section{Acknowledgement}
SSD acknowledges the support of NSF IIS 2110170, NSF DMS 2134106, NSF CCF 2212261, NSF IIS 2143493, NSF CCF 2019844, and NSF IIS 2229881. MF acknowledges the support of NSF awards  
CCF 2007036, TRIPODS II DMS 2023166, CCF +2212261, and CCF-AF 2312775.

\bibliography{reference}

\begin{thebibliography}{42}
\providecommand{\natexlab}[1]{#1}
\providecommand{\url}[1]{\texttt{#1}}
\expandafter\ifx\csname urlstyle\endcsname\relax
  \providecommand{\doi}[1]{doi: #1}\else
  \providecommand{\doi}{doi: \begingroup \urlstyle{rm}\Url}\fi

\bibitem[Bai et~al.(2021)Bai, Jin, Wang, and Xiong]{bai2021sample}
Yu~Bai, Chi Jin, Huan Wang, and Caiming Xiong.
\newblock Sample-efficient learning of stackelberg equilibria in general-sum games.
\newblock \emph{Advances in Neural Information Processing Systems}, 34:\penalty0 25799--25811, 2021.

\bibitem[Beckmann et~al.(1956)Beckmann, McGuire, and Winsten]{beckmann1956studies}
Martin Beckmann, Charles~B McGuire, and Christopher~B Winsten.
\newblock Studies in the economics of transportation.
\newblock Technical report, 1956.

\bibitem[Bergendorff et~al.(1997)Bergendorff, Hearn, and Ramana]{bergendorff1997congestion}
Pia Bergendorff, Donald~W Hearn, and Motakuri~V Ramana.
\newblock \emph{Congestion toll pricing of traffic networks}.
\newblock Springer, 1997.

\bibitem[Bil{\`o} and Vinci(2019)]{bilo2019dynamic}
Vittorio Bil{\`o} and Cosimo Vinci.
\newblock Dynamic taxes for polynomial congestion games.
\newblock \emph{ACM Transactions on Economics and Computation (TEAC)}, 7\penalty0 (3):\penalty0 1--36, 2019.

\bibitem[Blum et~al.(2014)Blum, Haghtalab, and Procaccia]{blum2014learning}
Avrim Blum, Nika Haghtalab, and Ariel~D Procaccia.
\newblock Learning optimal commitment to overcome insecurity.
\newblock \emph{Advances in Neural Information Processing Systems}, 27, 2014.

\bibitem[Braess(1968)]{braess1968paradoxon}
Dietrich Braess.
\newblock {\"U}ber ein paradoxon aus der verkehrsplanung.
\newblock \emph{Unternehmensforschung}, 12:\penalty0 258--268, 1968.

\bibitem[Caragiannis et~al.(2010)Caragiannis, Kaklamanis, and Kanellopoulos]{caragiannis2010taxes}
Ioannis Caragiannis, Christos Kaklamanis, and Panagiotis Kanellopoulos.
\newblock Taxes for linear atomic congestion games.
\newblock \emph{ACM Transactions on Algorithms (TALG)}, 7\penalty0 (1):\penalty0 1--31, 2010.

\bibitem[Chen and Lu(2016)]{chen2016generalized}
Po-An Chen and Chi-Jen Lu.
\newblock Generalized mirror descents in congestion games.
\newblock \emph{Artificial Intelligence}, 241:\penalty0 217--243, 2016.

\bibitem[Cohen et~al.(2021)Cohen, Lee, and Song]{cohen2021solving}
Michael~B Cohen, Yin~Tat Lee, and Zhao Song.
\newblock Solving linear programs in the current matrix multiplication time.
\newblock \emph{Journal of the ACM (JACM)}, 68\penalty0 (1):\penalty0 1--39, 2021.

\bibitem[Colson et~al.(2007)Colson, Marcotte, and Savard]{colson2007overview}
Beno{\^\i}t Colson, Patrice Marcotte, and Gilles Savard.
\newblock An overview of bilevel optimization.
\newblock \emph{Annals of operations research}, 153:\penalty0 235--256, 2007.

\bibitem[Cui et~al.(2022)Cui, Xiong, Fazel, and Du]{cui2022learning}
Qiwen Cui, Zhihan Xiong, Maryam Fazel, and Simon~S Du.
\newblock Learning in congestion games with bandit feedback.
\newblock \emph{Advances in Neural Information Processing Systems}, 35:\penalty0 11009--11022, 2022.

\bibitem[Dadi et~al.(2024)Dadi, Panageas, Skoulakis, Viano, and Cevher]{dadi2024polynomial}
Leello Dadi, Ioannis Panageas, Stratis Skoulakis, Luca Viano, and Volkan Cevher.
\newblock Polynomial convergence of bandit no-regret dynamics in congestion games.
\newblock \emph{arXiv preprint arXiv:2401.09628}, 2024.

\bibitem[Dong et~al.(2023)Dong, Wu, Wang, Wang, and Chen]{dong2023taming}
Jing Dong, Jingyu Wu, Siwei Wang, Baoxiang Wang, and Wei Chen.
\newblock Taming the exponential action set: Sublinear regret and fast convergence to nash equilibrium in online congestion games.
\newblock \emph{arXiv preprint arXiv:2306.13673}, 2023.

\bibitem[Grontas et~al.(2024)Grontas, Belgioioso, Cenedese, Fochesato, Lygeros, and D{\"o}rfler]{grontas2024big}
Panagiotis~D Grontas, Giuseppe Belgioioso, Carlo Cenedese, Marta Fochesato, John Lygeros, and Florian D{\"o}rfler.
\newblock Big hype: Best intervention in games via distributed hypergradient descent.
\newblock \emph{IEEE Transactions on Automatic Control}, 2024.

\bibitem[Harks et~al.(2015)Harks, Kleinert, Klimm, and M{\"o}hring]{harks2015computing}
Tobias Harks, Ingo Kleinert, Max Klimm, and Rolf~H M{\"o}hring.
\newblock Computing network tolls with support constraints.
\newblock \emph{Networks}, 65\penalty0 (3):\penalty0 262--285, 2015.

\bibitem[Jiang et~al.(2022)Jiang, Cui, Xiong, Fazel, and Du]{jiang2022offline}
Haozhe Jiang, Qiwen Cui, Zhihan Xiong, Maryam Fazel, and Simon~S Du.
\newblock Offline congestion games: How feedback type affects data coverage requirement.
\newblock \emph{arXiv preprint arXiv:2210.13396}, 2022.

\bibitem[Koutsoupias and Papadimitriou(1999)]{koutsoupias1999worst}
Elias Koutsoupias and Christos Papadimitriou.
\newblock Worst-case equilibria.
\newblock In \emph{Annual symposium on theoretical aspects of computer science}, pages 404--413. Springer, 1999.

\bibitem[Krichene et~al.(2015)Krichene, Drigh{\`e}s, and Bayen]{krichene2015online}
Walid Krichene, Benjamin Drigh{\`e}s, and Alexandre~M Bayen.
\newblock Online learning of nash equilibria in congestion games.
\newblock \emph{SIAM Journal on Control and Optimization}, 53\penalty0 (2):\penalty0 1056--1081, 2015.

\bibitem[Lattimore and Szepesv{\'a}ri(2020)]{lattimore2020bandit}
Tor Lattimore and Csaba Szepesv{\'a}ri.
\newblock \emph{Bandit algorithms}.
\newblock Cambridge University Press, 2020.

\bibitem[Letchford et~al.(2009)Letchford, Conitzer, and Munagala]{letchford2009learning}
Joshua Letchford, Vincent Conitzer, and Kamesh Munagala.
\newblock Learning and approximating the optimal strategy to commit to.
\newblock In \emph{Algorithmic Game Theory: Second International Symposium, SAGT 2009, Paphos, Cyprus, October 18-20, 2009. Proceedings 2}, pages 250--262. Springer, 2009.

\bibitem[Li et~al.(2020)Li, Yu, Nie, and Wang]{li2020end}
Jiayang Li, Jing Yu, Yu~Nie, and Zhaoran Wang.
\newblock End-to-end learning and intervention in games.
\newblock \emph{Advances in Neural Information Processing Systems}, 33:\penalty0 16653--16665, 2020.

\bibitem[Li et~al.(2022)Li, Yu, Wang, Liu, Wang, and Nie]{li2022differentiable}
Jiayang Li, Jing Yu, Qianni Wang, Boyi Liu, Zhaoran Wang, and Yu~Marco Nie.
\newblock Differentiable bilevel programming for stackelberg congestion games.
\newblock \emph{arXiv preprint arXiv:2209.07618}, 2022.

\bibitem[Li et~al.(2023)Li, Yu, Liu, Nie, and Wang]{li2023achieving}
Jiayang Li, Jing Yu, Boyi Liu, Yu~Nie, and Zhaoran Wang.
\newblock Achieving hierarchy-free approximation for bilevel programs with equilibrium constraints.
\newblock In \emph{International Conference on Machine Learning}, pages 20312--20335. PMLR, 2023.

\bibitem[Liu et~al.(2022)Liu, Li, Yang, Wai, Hong, Nie, and Wang]{liu2022inducing}
Boyi Liu, Jiayang Li, Zhuoran Yang, Hoi-To Wai, Mingyi Hong, Yu~Nie, and Zhaoran Wang.
\newblock Inducing equilibria via incentives: Simultaneous design-and-play ensures global convergence.
\newblock \emph{Advances in Neural Information Processing Systems}, 35:\penalty0 29001--29013, 2022.

\bibitem[Luo et~al.(1996)Luo, Pang, and Ralph]{luo1996mathematical}
Zhi-Quan Luo, Jong-Shi Pang, and Daniel Ralph.
\newblock \emph{Mathematical programs with equilibrium constraints}.
\newblock Cambridge University Press, 1996.

\bibitem[Maheshwari et~al.(2023)Maheshwari, Sasty, Ratliff, and Mazumdar]{maheshwari2023convergent}
Chinmay Maheshwari, S~Shankar Sasty, Lillian Ratliff, and Eric Mazumdar.
\newblock Convergent first-order methods for bi-level optimization and stackelberg games.
\newblock \emph{arXiv preprint arXiv:2302.01421}, 2023.

\bibitem[Marden and Roughgarden(2014)]{marden2014generalized}
Jason~R Marden and Tim Roughgarden.
\newblock Generalized efficiency bounds in distributed resource allocation.
\newblock \emph{IEEE Transactions on Automatic Control}, 59\penalty0 (3):\penalty0 571--584, 2014.

\bibitem[Monderer and Shapley(1996)]{monderer1996potential}
Dov Monderer and Lloyd~S Shapley.
\newblock Potential games.
\newblock \emph{Games and economic behavior}, 14\penalty0 (1):\penalty0 124--143, 1996.

\bibitem[Nisan et~al.(2007)Nisan, Roughgarden, Tardos, and Vazirani]{NisaRougTardVazi07}
Noam Nisan, Tim Roughgarden, \'Eva Tardos, and Vijay~V. Vazirani.
\newblock \emph{Algorithmic Game Theory}.
\newblock Cambridge University Press, New York, NY, USA, 2007.

\bibitem[Paccagnan and Gairing(2021)]{paccagnan2021congestion}
Dario Paccagnan and Martin Gairing.
\newblock In congestion games, taxes achieve optimal approximation.
\newblock In \emph{Proceedings of the 22nd ACM Conference on Economics and Computation}, pages 743--744, 2021.

\bibitem[Paccagnan et~al.(2021)Paccagnan, Chandan, Ferguson, and Marden]{paccagnan2021optimal}
Dario Paccagnan, Rahul Chandan, Bryce~L Ferguson, and Jason~R Marden.
\newblock Optimal taxes in atomic congestion games.
\newblock \emph{ACM Transactions on Economics and Computation (TEAC)}, 9\penalty0 (3):\penalty0 1--33, 2021.

\bibitem[Panageas et~al.(2023)Panageas, Skoulakis, Viano, Wang, and Cevher]{panageas2023semi}
Ioannis Panageas, Stratis Skoulakis, Luca Viano, Xiao Wang, and Volkan Cevher.
\newblock Semi bandit dynamics in congestion games: Convergence to nash equilibrium and no-regret guarantees.
\newblock In \emph{International Conference on Machine Learning}, pages 26904--26930. PMLR, 2023.

\bibitem[Peng et~al.(2019)Peng, Shen, Tang, and Zuo]{peng2019learning}
Binghui Peng, Weiran Shen, Pingzhong Tang, and Song Zuo.
\newblock Learning optimal strategies to commit to.
\newblock In \emph{Proceedings of the AAAI Conference on Artificial Intelligence}, volume~33, pages 2149--2156, 2019.

\bibitem[Pigou(1912)]{pigou1912wealth}
Arthur~Cecil Pigou.
\newblock \emph{Wealth and welfare}.
\newblock Macmillan and Company, limited, 1912.

\bibitem[Rosenthal(1973)]{rosenthal1973class}
Robert~W Rosenthal.
\newblock A class of games possessing pure-strategy nash equilibria.
\newblock \emph{International Journal of Game Theory}, 2:\penalty0 65--67, 1973.

\bibitem[Roughgarden(2005)]{roughgarden2005selfish}
Tim Roughgarden.
\newblock \emph{Selfish routing and the price of anarchy}.
\newblock MIT press, 2005.

\bibitem[Roughgarden and Tardos(2004)]{roughgarden2004bounding}
Tim Roughgarden and {\'E}va Tardos.
\newblock Bounding the inefficiency of equilibria in nonatomic congestion games.
\newblock \emph{Games and economic behavior}, 47\penalty0 (2):\penalty0 389--403, 2004.

\bibitem[Takezawa(2005)]{takezawa2005introduction}
Kunio Takezawa.
\newblock \emph{Introduction to nonparametric regression}.
\newblock John Wiley \& Sons, 2005.

\bibitem[Von~Stackelberg(2010)]{von2010market}
Heinrich Von~Stackelberg.
\newblock \emph{Market structure and equilibrium}.
\newblock Springer Science \& Business Media, 2010.

\bibitem[Wardrop(1952)]{wardrop1952road}
John~Glen Wardrop.
\newblock Road paper. some theoretical aspects of road traffic research.
\newblock \emph{Proceedings of the institution of civil engineers}, 1\penalty0 (3):\penalty0 325--362, 1952.

\bibitem[Zhao et~al.(2023)Zhao, Zhu, Jiao, and Jordan]{zhao2023online}
Geng Zhao, Banghua Zhu, Jiantao Jiao, and Michael Jordan.
\newblock Online learning in stackelberg games with an omniscient follower.
\newblock In \emph{International Conference on Machine Learning}, pages 42304--42316. PMLR, 2023.

\bibitem[Zhong et~al.(2021)Zhong, Yang, Wang, and Jordan]{zhong2021can}
Han Zhong, Zhuoran Yang, Zhaoran Wang, and Michael~I Jordan.
\newblock Can reinforcement learning find stackelberg-nash equilibria in general-sum markov games with myopic followers?
\newblock \emph{arXiv preprint arXiv:2112.13521}, 2021.

\end{thebibliography}
\bibliographystyle{plainnat}

\newpage
\appendix
\onecolumn

\section{Basics about Congestion Games}

\begin{restatable}{lemma}{ne}
    \label{lemma:NE to minimizer}
    If strategy $x$ is an $\epsilon$-NE in a congestion game, then $x$ is an $\epsilon$-minimizer of the corresponding potential function $\Phi(\cdot)$.
\end{restatable}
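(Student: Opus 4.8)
The claim is that an $\epsilon$-Nash equilibrium $x$ of a nonatomic congestion game is an $\epsilon$-approximate minimizer of the potential function $\Phi$, i.e. $\Phi(x) \le \min_{x' \in \X} \Phi(x') + \epsilon$. The plan is to exploit convexity of $\Phi$ (guaranteed by Assumption~\ref{asp:cost}, monotonicity of the costs) together with the explicit form of its gradient. The key observation is that $\frac{\partial \Phi}{\partial x_{i,a}}(x) = \sum_{f \in a} c_f(l_f(x)) = c_a(x)$: differentiating $\Phi(x) = \sum_f \int_0^{l_f(x)} c_f(u)\,du$ in $x_{i,a}$ picks up $c_f(l_f(x))$ for each facility $f \in a$ via the chain rule, since $\partial l_f / \partial x_{i,a} = \1[f \in a]$. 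Thus the potential's partial derivative along action $a$ of commodity $i$ is exactly the cost $c_a(x)$ that the $\epsilon$-Nash condition controls.

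First I would let $x^\star \in \argmin_{x' \in \X} \Phi(x')$. By convexity of $\Phi$,
\[
\Phi(x) - \Phi(x^\star) \le \nabla \Phi(x)^\top (x - x^\star) = \sum_{i \in [m]} \sum_{a \in \A_i} c_a(x)\,(x_{i,a} - x^\star_{i,a}).
\]
Next I would handle each commodity $i$ separately. Write $\sum_{a} c_a(x) x_{i,a} - \sum_a c_a(x) x^\star_{i,a}$. Let $c_i^{\min} := \min_{a \in \A_i} c_a(x)$ be the smallest action cost for commodity $i$ under $x$. For the in-support actions ($x_{i,a} > 0$), the $\epsilon$-Nash condition gives $c_a(x) \le c_{a'}(x) + \epsilon$ for all $a'$, so in particular $c_a(x) \le c_i^{\min} + \epsilon$; hence $\sum_a c_a(x) x_{i,a} \le (c_i^{\min} + \epsilon)\sum_a x_{i,a} = (c_i^{\min}+\epsilon) w_i$. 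For the $x^\star$ term, since every $c_a(x) \ge c_i^{\min}$ and $x^\star_{i,a} \ge 0$ with $\sum_a x^\star_{i,a} = w_i$, we get $\sum_a c_a(x) x^\star_{i,a} \ge c_i^{\min} w_i$. Subtracting, the commodity-$i$ contribution is at most $\epsilon w_i$. Summing over $i$ and using $\sum_i w_i = 1$ yields $\Phi(x) - \Phi(x^\star) \le \epsilon \sum_i w_i = \epsilon$, which is the claim.

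The main obstacle — really the only delicate point — is justifying the gradient computation and the first-order convexity inequality at the level of rigor the rest of the paper expects: one must check that $\Phi$ is differentiable (it is $C^1$ since each $c_f$ is continuous, indeed $\beta$-smooth), that the feasible set $\X$ is a polytope so $x - x^\star$ is a feasible direction, and that the chain-rule bookkeeping with the loads $l_f(x)$ is correct. Everything after that is the elementary averaging argument above. A secondary point worth a sentence is that the $\epsilon$-Nash condition as stated in Definition~\ref{def:nash} only compares in-support actions to other actions of the same commodity, which is exactly what the argument uses; no cross-commodity comparison is needed. I would also remark that the $\epsilon = 0$ case recovers the standard Beckmann characterization that Nash equilibria are exact potential minimizers, so the proof is just a quantitative version of that classical fact.
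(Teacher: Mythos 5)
Your proposal is correct and follows essentially the same route as the paper's proof: compute $\nabla_{i,a}\Phi(x)=\sum_{f\in a}c_f(l_f(x))=c_a(x)$, apply the first-order convexity inequality at the minimizer $x^\star$, and bound each commodity's contribution by $\epsilon w_i$ using the $\epsilon$-Nash condition before summing over $i$ with $\sum_i w_i=1$. Your write-up is merely a more explicit version of the same argument, with the minimum-cost action $c_i^{\min}$ named rather than left implicit.
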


\begin{proof}
Let $x^*=\argmin_{x\in\X}\Phi(x)$ and $y=\phi(x)$. First, we show that
\begin{align*}
    \nabla_{i,a}\Phi(x)=&\nabla_{i,a}\sum_f\int_{0}^{y_f}c_f(u)du\\
    =&\nabla_{i,a}\sum_{f\in a}\int_{0}^{y_f}c_f(u)du\\
    =&\sum_{f\in a}c_f(y_f)\nabla_{i,a}y_f\\
     =&\sum_{f\in a}c_f(y_f)\nabla_{i,a}\sum_{i',a':f\in a'}x_{i',a'}\\
     =&\sum_{f\in a}c_f(y_f).
\end{align*}
Then we have
\begin{align*}
    \Phi(x)-\Phi(x^*)&\leq\inner{x-x^*,\nabla\Phi(x)}\tag{Convexity}\\
    &\leq \sum_{i\in[m]}\inner{x_i-x_i^*,\nabla_i\Phi(x)}\\
    &\leq \sum_{i\in[m]}\Mp{\sum_{a\in\A_i}x_{i,a}\sum_{f\in a}c_f(y_f)-\min_{a\in\A_i}w_i\sum_{f\in a}c_f(y_f)}\\
    &\leq \sum_{i\in[m]}\sum_{a\in\A_i}x_{i,a}\epsilon\\
    &=\sum_{i\in[m]}w_i\epsilon\\
    &=\epsilon.
\end{align*}
\end{proof}

\repa*

\begin{proof}
    For $y^1,y^2\in\Y$, we have
    \begin{align*}
        \Phi^{\mathrm{repa}}(y^1)+\Phi^{\mathrm{repa}}(y^2)-2\Phi^{\mathrm{repa}}(\frac{y^1+y^2}{2})=&\sum_f\Mp{\int_0^{y^1_f}c_f(u)du+\int_0^{y^2_f}c_f(u)du-2\int_0^{\frac{y^1_f+y^2_f}{2}}c_f(u)du}.
    \end{align*}
    Now we show that $\int_0^{y^1_f}c_f(u)du+\int_0^{y^2_f}c_f(u)du-2\int_0^{\frac{y^1_f+y^2_f}{2}}c_f(u)du$ is nonnegative for all $f\in\F$. W.l.o.g., we assume $y_f^1\leq y_f^2$ and we have
    \begin{align*}
        \int_0^{y^1_f}c_f(u)du+\int_0^{y^2_f}c_f(u)du-2\int_0^{\frac{y^1_f+y^2_f}{2}}c_f(u)du=&\int_{\frac{y^1_f+y^2_f}{2}}^{y^2_f}c_f(u)du-\int_{y^1_f}^{\frac{y^1_f+y^2_f}{2}}c_f(u)du\\
        =&\int_{y^1_f}^{\frac{y^1_f+y^2_f}{2}}\Mp{c_f(u+\frac{y^2_f-y^1_f}{2})-c_f(u)}du\\
        \geq& 0,
    \end{align*}
    where the last step is from Assumption \ref{asp:cost} (monotonicity). As a result, $\Phi^{\mathrm{repa}}$ is convex.

    Let $y^*=\argmin_y \Phi^{\mathrm{repa}}(y)$ and $x\in\phi^{-1}(y^*)$. If there exists $x'\in\X$ such that $\Phi(x')<\Phi(x)$, then we have $\Phi^{\mathrm{repa}}(\phi(x'))<\Phi^{\mathrm{repa}}(y^*)$, which contradicts the definition of $y^*$. As a result, $x$ is the minimizer of $\Phi(\cdot)$, which means $x$ is a Nash equilibrium.
\end{proof}

\begin{lemma}\label{lemma:discont}
    The Nash load under tax $\tau$: $y(\tau)=\argmin_{y\in\Y}\Phi(y;\tau)$ is not continuous w.r.t. $\tau$. In addition, the social welfare $\Psi(y(\tau))$ is not convex w.r.t. $\tau$.
\end{lemma}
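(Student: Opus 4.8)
The plan is to exhibit a single, very small congestion game in which both phenomena occur, rather than argue in generality. A two-link parallel-edge instance suffices: take $F=2$ facilities, one commodity with weight $w_1=1$, and two actions $a=\{1\}$, $a'=\{2\}$, so that the feasible load set is $\Y=\{(y_1,y_2): y_1+y_2=1, y_i\ge 0\}$, parameterized by $y_1\in[0,1]$. Choose cost functions that are constant on the relevant range, e.g.\ $c_1(u)=0$ and $c_2(u)=1/2$ for $u\in[0,1]$ (these are trivially non-decreasing and $\beta$-smooth, so Assumption \ref{asp:cost} holds). Then the reparameterized potential is $\Phi^{\mathrm{repa}}(y_1;\tau)=\int_0^{y_1}\tau_1(u)\,du+\int_0^{1-y_1}\bigl(\tfrac12+\tau_2(u)\bigr)\,du$. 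With constant taxes $\tau_1\equiv s$, $\tau_2\equiv 0$, the potential is affine in $y_1$ with slope $s-\tfrac12$, so $y(\tau)$ equals $(1,0)$ when $s<\tfrac12$, equals $(0,1)$ when $s>\tfrac12$, and the minimizer set jumps at $s=\tfrac12$. This gives a discontinuity of $y(\tau)$ as $\tau$ varies over a one-parameter family, proving the first claim. (To have a well-defined single-valued selection one can instead perturb slightly, but the discontinuity of the argmin correspondence is already what the lemma asserts; I would phrase it as: there is no continuous selection, since the value of $y(\tau)$ flips between two fixed non-adjacent points.)

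For the non-convexity of $\tau\mapsto\Psi(y(\tau))$, I would reuse essentially the same instance but evaluate the social cost $\Psi(y)=y_1 c_1(y_1)+y_2 c_2(y_2)$ along the segment $\tau^{(s)}$ with $\tau^{(s)}_1\equiv s$, $\tau^{(s)}_2\equiv 0$, $s\in[0,1]$. We get $\Psi(y(\tau^{(0)}))=\Psi(1,0)=0$, $\Psi(y(\tau^{(1)}))=\Psi(0,1)=\tfrac12$, while at the midpoint $s=\tfrac12$ any tie-breaking of the argmin still leaves $\Psi$ between the two endpoint values, so a strict violation of midpoint convexity is not yet visible. To force non-convexity I would instead use three facilities / two routes with one shared edge (a Pigou-type network) so that the socially optimal load is interior, and choose the tax family so that $\Psi(y(\tau^{(s)}))$ first decreases (as the tax pushes the equilibrium toward the social optimum) and then increases (as it overshoots): a valley-shaped but non-convex — indeed, I'd make it so that the function is not convex by checking three points $s_0<s_1<s_2$ with $\Psi(y(\tau^{(s_1)}))$ strictly below the chord through the outer two, which already follows from the valley shape if the outer values are equal, and then separately two points where it lies strictly above a chord, using the discontinuity/flat regions. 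Concretely, on the parallel-link example, letting $s$ range so that $y_1(\tau^{(s)})$ takes values $1$, then (after making $c$ genuinely increasing) some interior value, then $0$, the composed map $s\mapsto\Psi$ traces out a non-monotone curve whose epigraph is not convex.

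The main obstacle is bookkeeping rather than conceptual: I must pick cost functions and a tax path for which (i) the Nash load $y(\tau^{(s)})$ is explicitly computable for every $s$ (which needs the potential to be strictly convex in $y$, so Lemma \ref{lemma:strongly convex}'s hypothesis should be arranged by taking $c$ with strictly positive slope, or by adding the $\epsilon u$ term), and (ii) the resulting $\Psi\circ y$ is verifiably non-convex via three explicit test points. I expect the cleanest route is: first establish strict convexity of $\Phi^{\mathrm{repa}}(\cdot;\tau)$ along the path (so $y(\tau^{(s)})$ is a genuine continuous-where-it-moves, possibly jump-discontinuous single-valued function of $s$), then write $y_1(\tau^{(s)})$ in closed form, then plug into $\Psi$ and exhibit the convexity violation by a one-line inequality among three sampled values. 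I would keep the instance as small as possible and present all constants numerically so the verification is immediate.
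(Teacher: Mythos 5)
Your first claim (discontinuity of $y(\tau)$) is handled correctly and by essentially the same construction as the paper: two parallel links with constant costs and a one-parameter family of constant taxes, so that the argmin of an affine-in-$y_1$ potential flips between the two extreme loads under an arbitrarily small change of tax. The paper's version ($c_1=1$, $c_2=1-\epsilon$, comparing $\tau=0$ with $\tau_2=2\epsilon$) keeps the argmin single-valued at both test taxes, which sidesteps the tie-breaking caveat you raise, but your argument is sound.

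The second claim is where there is a genuine gap. You correctly observe that your constant-cost instance does not obviously violate midpoint convexity at the three points you checked, but you then replace the proof with a plan ("a Pigou-type network", "a valley-shaped but non-convex" curve, "present all constants numerically") that is never executed: no cost functions, no closed form for $y(\tau^{(s)})$, no three test points with a verified inequality. Worse, the test you describe is stated backwards: a middle point \emph{strictly below} the chord through two outer points is exactly what convex functions do, so "which already follows from the valley shape if the outer values are equal" does not establish non-convexity, and "two points where it lies strictly above a chord" is not a well-formed convexity violation (a chord is determined by two graph points and the violation is a \emph{third}, intermediate point lying strictly above it). A valley shape per se is compatible with convexity. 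The paper closes this gap with a two-line computation: take $c_1=1$, $c_2(u)=\sqrt{u}$, constant tax $\tau_1=t\in[-1,0]$, $\tau_2=0$; the interior equilibrium condition $1+t=\sqrt{y_2}$ gives $y_2=(1+t)^2$ in closed form, so $\Psi(y(\tau))=1-(1+t)^2+(1+t)^3$, a cubic in $t$ whose second derivative changes sign on $[-1,0]$ and which is therefore not convex. This is precisely the explicit instance with a computable equilibrium that your write-up identifies as necessary but does not supply; as written, the second half of the lemma is unproved. (Incidentally, your own step-function example can be salvaged: the jump-up step function is non-convex if you test three points straddling the jump with the middle point on the high side, e.g.\ $s=0.4,\,0.55,\,0.6$, but you would still need to say this explicitly.)
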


\begin{proof}
    For the first part, we construct a congestion game with two facilities $f_1,f_2$, one commodity with action set $\{f_1,f_2\}$, and constant cost $c_1=1,c_2=1-\epsilon$ with $\epsilon>0$. Then for tax $\tau=0$, we have $y(\tau)=[0,1]$. For constant tax $\tau_1=0,\tau_2=2\epsilon$, we have $y(\tau)=[1,0]$. As $\epsilon$ can be arbitrarily small, $y(\tau)$ is not continuous w.r.t. $\tau$.

    For the second part, we construct a congestion game with two facilities $f_1,f_2$, one commodity with action set $\{f_1,f_2\}$, and cost function $c_1=1,c_2(u)=\sqrt{u}$ for $u\in[0,1]$. We apply constant tax $\tau:\tau_1=t,\tau_2=0$ for $t\in[-1,0]$. The Nash equilibrium under tax $\tau$ is $y(\tau)=[1-(1+t)^2,(1+t)^2]$. Then the social cost is $\Psi(y(\tau))=1-t(1+t)^2$, which is not convex on $[-1,0]$.
\end{proof}

\section{Missing Proofs in Section \ref{sec:alg}}

\sc*

\begin{proof}
    First, by the definition of the potential function $\Phi$, it is easy to show that $\nabla\Phi(y)=[c_f(y_f)]_{f\in\F}$. For $y^1,y^2\in\Y$, we have
    $$(\nabla \Phi(y^1)-\nabla \Phi(y^2))^\top(y^1-y^2)=\sum_{f\in\F}(c_f(y_f^1)-c_f(y_f^2))(y_f^1-y_f^2)\geq\sum_{f\in\F}\epsilon(y_f^1-y_f^2)^2=\epsilon\Norm{y^1-y^2}_2^2,$$
    which implies $\Phi(\cdot)$ is a $\epsilon$-strongly convex function.

    For the second argument, we only need to construct a congestion game such that there exists two strategy $x^1,x^2\in\X$ such that for $\phi(tx^1+(1-t)x^2)$ is a constant for $t\in[0,1]$, which implies the potential function $\Phi(tx^1+(1-t)x^2)=\Phi^{\mathrm{repa}}(\phi(tx^1+(1-t)x^2))$ is a constant w.r.t. $t$. However, a strongly convex function cannot be a constant on a line, which implies $\Phi$ is not strongly convex. 

    We construct a congestion game with three facilities $f_1,f_2,f_3$, three actions $a_1=\{f_1\},a_2=\{f_2\},a_3=\{f_3\}$ and three commodities with action set $\{a_1,a_2\},\{a_2,a_3\},\{a_3,a_1\}$. Strategy $x^1:x^1_1=[1,0,0],x^1_2=[0,1,0],x^1_3=[0,0,1]$ and $x^2:x^2_1=[0,1,0],x^2_2=[0,0,1],x^2_3=[1,0,0]$. Then $tx^1+(1-t)x^2$ is a feasible strategy and we have $\phi(tx^1+(1-t)x^2)=[1,1,1]$ for all $t\in[0,1]$.
\end{proof}

\dl*

\begin{proof}
    For simplicity, we consider the equivalent tax-free case that we have two costs $c,\dot{c}$ with subgradient lower bounded by $\epsilon$ and they only differ in facility $\dot{f}$. The potential functions are 
    $$\Phi(Y)=\sum_f\int_0^{Y_f}c_f(u)du,\dot{\Phi}(Y)=\sum_f\int_0^{Y_f}\dot{c}_f(u)du.$$
    By Lemma \ref{lemma:strongly convex}, $\Phi$ and $\dot{\Phi}$ are strongly convex and thus the Nash equilibrium load $y$ and $\dot{y}$ are unique. Suppose $y_{\dot{f}}=\dot{y}_{\dot{f}}$. Consider any $Y\in\Y$ such that $Y_{\dot{f}}=y_{\dot{f}}$, we have
    \begin{align*}
        \dot{\Phi}(Y)-\dot{\Phi}(y)=&\sum_f\int_{y_f}^{Y_f}\dot{c}_f(u)du=\sum_{f\neq \dot{f}}\int_{y_f}^{Y_f}\dot{c}_f(u)du+\int_{y_{\dot{f}}}^{Y_{\dot{f}}}\dot{c}_{\dot{f}}(u)du\\
        =&\sum_{f\neq \dot{f}}\int_{y_f}^{Y_f}c_f(u)du=\Phi(Y)-\Phi(y)\geq0.
    \end{align*}
    As a result, we have $\dot{\Phi}(y)\leq\dot{\Phi}(\dot{y})$. By the optimality of $\dot{y}$, we have $y=\dot{y}$. By contradiction, if $y\neq\dot{y}$, we have $y_{\dot{f}}=\dot{y}_{\dot{f}}$.
\end{proof}

\begin{lemma}\label{lemma:gradient error}
    If $\abs{u_1-u_2}\leq\Delta$, then for any $\abs{u_3-u_1}\leq \Delta$, we have
    $$\abs{\frac{c_f(u_1)-c_f(u_2)}{u_1-u_2}-c'_f(u_3)}\leq\epsilon.$$
\end{lemma}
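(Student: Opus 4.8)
The plan is to reduce the difference quotient to a value of the derivative via the mean value theorem, and then exploit that $\beta$-smoothness of $c_f$ means $c_f'$ is $\beta$-Lipschitz on $[0,1]$, together with the fact that the grid spacing $\Delta$ was chosen small enough that $2\beta\Delta\le\epsilon$.

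First, I would note that under Assumption \ref{asp:cost} the function $c_f$ is $\beta$-smooth, which in one dimension means $c_f$ is differentiable with $|c_f'(x)-c_f'(y)|\le\beta|x-y|$ for all $x,y\in[0,1]$; in particular $c_f'$ exists everywhere needed. Assuming $u_1\ne u_2$ (the only case where the statement is meaningful, and the case actually used, since the algorithm invokes this with $0<|y^t_f-\dot y^t_f|\le\Delta$), apply the mean value theorem on the interval with endpoints $u_1$ and $u_2$: there exists $\xi$ strictly between $u_1$ and $u_2$ with
$$\frac{c_f(u_1)-c_f(u_2)}{u_1-u_2}=c_f'(\xi).$$

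Next I would bound the distance from $\xi$ to $u_3$ by the triangle inequality: since $\xi$ lies between $u_1$ and $u_2$, $|\xi-u_1|\le|u_1-u_2|\le\Delta$, so $|\xi-u_3|\le|\xi-u_1|+|u_1-u_3|\le\Delta+\Delta=2\Delta$. Applying the $\beta$-Lipschitz property of $c_f'$ then gives
$$\left|\frac{c_f(u_1)-c_f(u_2)}{u_1-u_2}-c_f'(u_3)\right|=|c_f'(\xi)-c_f'(u_3)|\le\beta|\xi-u_3|\le 2\beta\Delta.$$
Finally, recall that $K=\ceil{2\beta/\epsilon}$ and $\Delta=1/K$, so $K\ge 2\beta/\epsilon$ and hence $2\beta\Delta=2\beta/K\le\epsilon$, which yields the claimed bound.

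There is no real obstacle here; the only point requiring care is the constant bookkeeping — making sure the grid was set up so that $2\beta\Delta\le\epsilon$ rather than just $\beta\Delta\le\epsilon$, which is why the worst-case displacement $2\Delta$ (from $u_3$ possibly being $\Delta$ away from $u_1$ on the opposite side from $\xi$) is absorbed exactly by the choice $K=\ceil{2\beta/\epsilon}$.
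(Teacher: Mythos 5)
Your proof is correct and follows essentially the same route as the paper's: the mean value theorem to replace the difference quotient by $c_f'(\xi)$, the triangle inequality to get $|\xi-u_3|\le 2\Delta$, and $\beta$-Lipschitzness of $c_f'$ combined with the grid choice $\Delta=1/\ceil{2\beta/\epsilon}$ to conclude. If anything, your constant bookkeeping ($2\beta\Delta\le\epsilon$) is stated more carefully than the paper's, which writes $2\Delta\le\frac{\epsilon}{2\beta}$ where only $2\Delta\le\frac{\epsilon}{\beta}$ holds and is needed.
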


\begin{proof}
    This is a direct corollary of the $\beta$-smoothness. By mean value theorem, we have $\frac{c_f(u_1)-c_f(u_2)}{u_1-u_2}=c'_f(u)$ for some $u\in[u_1,u_2]$. As $\abs{u-u_3}\leq\abs{u-u_1}+\abs{u_1-u_3}\leq 2\Delta\leq \frac{\epsilon}{2\beta}$, we have
    $$\abs{\frac{c_f(u_1)-c_f(u_2)}{u_1-u_2}-c'_f(u_3)}\leq\epsilon.$$
\end{proof}

\begin{lemma}\label{lemma:known index error}
    For round $t$ and facility $f$, if $u\in \K^t_{f}$, then we have $\abs{\tau^t_{f}(u)-\tau^*_f(u)}\leq 2\epsilon$.
\end{lemma}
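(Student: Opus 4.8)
The plan is to prove, by induction on the round index $t$, the slightly stronger statement that $\abs{\widehat{\tau}^t_f(u)-\tau^*_f(u)}\le\epsilon$ for every facility $f$ and every known index $u\in\K^t_f$, while also carrying along the auxiliary invariant $\widehat{\tau}^t_f(1)\ge\tau^*_f(1)-\epsilon$. The lemma then follows immediately, because $\tau^t_f(u)=\widehat{\tau}^t_f(u)+\epsilon u$ and $u\le1$ give $\abs{\tau^t_f(u)-\tau^*_f(u)}\le\abs{\widehat{\tau}^t_f(u)-\tau^*_f(u)}+\epsilon u\le 2\epsilon$. The base case $t=1$ is immediate: $\K^1_f=\{0\}$ with $\widehat{\tau}^1_f(0)=0=0\cdot c'_f(0)=\tau^*_f(0)$, and $\widehat{\tau}^1_f(1)=\beta\ge\tau^*_f(1)$ by construction.

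For the inductive step I would exploit that from round $t$ to $t+1$ only $\widehat{\tau}_{\widetilde{f}^t}$ is modified, and only at the newly inserted index (or indices) $u\in\{[y^t_{\widetilde{f}^t}]^+_\L,[y^t_{\widetilde{f}^t}]^-_\L\}\setminus\K^t_{\widetilde{f}^t}$; write $f=\widetilde{f}^t$ in what follows. The update \eqref{eq:update 1} sets $\widehat{\tau}^{t+1}_f(u)=\mathrm{clip}\bigl(\hat{v},\,\widehat{\tau}^t_f(a),\,\widehat{\tau}^t_f(b)\bigr)$ with $\hat{v}=u\cdot\frac{c^t_f-\dot{c}^t_f}{y^t_f-\dot{y}^t_f}$, $a=[y^t_f]^-_{\K^t_f}$ and $b=[y^t_f]^+_{\K^t_f\cup\{1\}}$. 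The first step is to bound the raw estimate: $c^t_f=c_f(y^t_f)$ and $\dot{c}^t_f=c_f(\dot{y}^t_f)$ are the observed Nash costs, $0<\abs{y^t_f-\dot{y}^t_f}\le\Delta$ by Proposition \ref{prop:algo2}, and $\abs{u-y^t_f}\le\Delta$ since $u$ and $y^t_f$ lie in the same grid cell of $\L$ of width $\Delta$; hence Lemma \ref{lemma:gradient error} gives $\abs{\frac{c_f(y^t_f)-c_f(\dot{y}^t_f)}{y^t_f-\dot{y}^t_f}-c'_f(u)}\le\epsilon$, and multiplying by $u\le1$ yields $\abs{\hat{v}-\tau^*_f(u)}\le\epsilon$.

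The second step is to verify that clipping does not increase the error beyond $\epsilon$. Note first that $a\le u\le b$, since $\K^t_f\cup\{1\}\subseteq\L$. If $\hat{v}\in[\widehat{\tau}^t_f(a),\widehat{\tau}^t_f(b)]$ there is nothing to do. If $\hat{v}<\widehat{\tau}^t_f(a)$, the clipped value is $\widehat{\tau}^t_f(a)$; by the inductive hypothesis $\widehat{\tau}^t_f(a)\le\tau^*_f(a)+\epsilon$, which by $a\le u$ and monotonicity of $\tau^*_f$ (Assumption \ref{asp:tax}) is $\le\tau^*_f(u)+\epsilon$, while $\widehat{\tau}^t_f(a)>\hat{v}\ge\tau^*_f(u)-\epsilon$ gives the matching lower bound. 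The case $\hat{v}>\widehat{\tau}^t_f(b)$ is symmetric, using $\widehat{\tau}^t_f(b)\ge\tau^*_f(b)-\epsilon$ (from the inductive hypothesis if $b\in\K^t_f$, and from the auxiliary invariant if $b=1$) together with $\tau^*_f(b)\ge\tau^*_f(u)$. In every case $\abs{\widehat{\tau}^{t+1}_f(u)-\tau^*_f(u)}\le\epsilon$; all other indices and facilities are untouched, and the invariant at $1$ is preserved (if $1$ enters $\K^{t+1}_f$, apply the bound just proved at $u=1$; otherwise $\widehat{\tau}^{t+1}_f(1)$ is unchanged). This closes the induction.

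The hard part, such as it is, lies entirely in the clipping step: one must avoid comparing $\widehat{\tau}^t_f(a)$ to $\tau^*_f(a)$ and instead push the bound over to $\tau^*_f(u)$, which is exactly where monotonicity of the marginal cost tax is used, plus the minor bookkeeping that the right clip endpoint $b$ can be the artificial boundary point $1\notin\K^t_f$ and is then governed by the auxiliary invariant rather than the main hypothesis. Everything else is a short chain of triangle inequalities.
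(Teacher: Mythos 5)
Your proposal is correct and follows essentially the same route as the paper's proof: induction on $t$ for the stronger bound $\abs{\widehat{\tau}^t_f(u)-\tau^*_f(u)}\le\epsilon$, with the raw two-point estimate controlled via Lemma \ref{lemma:gradient error} and the same three-way case analysis on the clipping, using monotonicity of $\tau^*_f$ (Assumption \ref{asp:tax}) to transfer the endpoint bounds to $u$ and the boundary value at $1$ to handle $b=1\notin\K^t_f$. The only cosmetic difference is that you carry the invariant at $1$ as $\widehat{\tau}^t_f(1)\ge\tau^*_f(1)-\epsilon$ whereas the paper invokes $\widehat{\tau}^t_f(1)=\beta\ge\tau^*_f(1)$ directly.
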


\begin{proof}
    By the algorithm design, for each $u\in \K^t_{f}$, $\tau^t_{f}(u)$ will not change after $u$ is added to $\K_{f}$. We will use induction on $t$ to prove $\abs{\widehat{\tau}^t_{f}(u)-\tau_f(u)}\leq \epsilon$ for $u\in\K_f^t$. At round $t=1$, $\K_f^1=\{0\}$ and $\widehat{\tau}^1_{f}(0)=\tau^*_f(0)=0$ holds.
    
    Suppose at round $t$, we have $\K^{t+1}_{\widetilde{f}^t}=\K^t_{\widetilde{f}^t}\bigcup\{u\}$ with $u\in \{[y_{\widetilde{f}^t}^t]_{\L}^+,[y_{\widetilde{f}^t}^t]_{\L}^-\}\backslash\K^t_{\widetilde{f}^t}$, and $\K^{t+1}_f=\K^t_f$ for $f\neq\widetilde{f}^t$. By the induction hypothesis, we only need to prove $\abs{\widehat{\tau}^{t+1}_{\widetilde{f}^t}(u)-\tau_{\widetilde{f}^t}^*(u)}\leq 2\epsilon$. Recall that 
    $$\widehat{\tau}^{t+1}_{\widetilde{f}^t}(u)=\mathrm{clip}\bigl(u\cdot\frac{c^t_{\widetilde{f}^t}-\dot{c}^t_{\widetilde{f}^t}}{y^t_{\widetilde{f}^t}-\dot{y}^t_{\widetilde{f}^t}},\widehat{\tau}^t_{\widetilde{f}^t}([y_{\widetilde{f}^t}^t]_{\K_{\widetilde{f}^t}^t}^-)
    ,\widehat{\tau}^t_{\widetilde{f}^t}([y_{\widetilde{f}^t}^{t}]_{\K_{\widetilde{f}^t}^t\bigcup\{1\}}^+)\bigr).$$
    Then we have the following three cases. For simplicity we replace $\widetilde{f}^t$ with $f$.

    (1) $\widehat{\tau}^{t+1}_{f}(u)=u\cdot\frac{c^t_f-\dot{c}^t_f}{y^t_f-\dot{y}^t_f}$.  By Lemma \ref{lemma:y diff} and Lemma \ref{lemma:gradient error}, we have
    $$\abs{\widehat{\tau}_f^{t+1}(u)-\tau^*_f(u)}=\abs{u\frac{c^t_f-\dot{c}^t_f}{y^t_f-\dot{y}^t_f}-uc_f'(u)}\leq\epsilon.$$ 

    (2) $\widehat{\tau}^{t+1}_{f}(u)=\widehat{\tau}^t_{f}([y_f^t]_{\K_f^t}^-)$ and $u\cdot\frac{c^t_f-\dot{c}^t_f}{y^t_f-\dot{y}^t_f}\leq\widehat{\tau}^t_{f}([y_f^t]_{\K_f^t}^-)$. Then we have
    $$\widehat{\tau}^{t+1}_{f}(u)=\widehat{\tau}^t_{f}([y_f^t]_{\K_f^t}^-)\leq \tau_f^*([y_f^t]_{\K_f^t}^-)+\epsilon\leq \tau_f^*(u)+\epsilon,$$
    where the first inequality is from the induction hypothesis as $[y_f^t]_{\K_f^t}^-\in\K_f^t$ and the second inequality is from Assumption \ref{asp:tax}. In addition, we have
    $$\widehat{\tau}^{t+1}_{f}(u)\geq u\cdot\frac{c^t_f-\dot{c}^t_f}{y^t_f-\dot{y}^t_f}\geq uc'_f(u)-\epsilon=\tau_f^*(u)-\epsilon.$$

    (3) $\widehat{\tau}^{t+1}_{f}(u)=\widehat{\tau}^t_{f}([y_f^t]_{\K_f^t\bigcup\{1\}}^+)$ and $u\cdot\frac{c^t_f-\dot{c}^t_f}{y^t_f-\dot{y}^t_f}\geq\widehat{\tau}^t_{f}([y_f^t]_{\K_f^t\bigcup\{1\}}^+)$. Then we have
    $$\widehat{\tau}^{t+1}_{f}(u)\leq u\frac{c^t_f-\dot{c}^t_f}{y^t_f-\dot{y}^t_f}\leq uc'_f(u)+\epsilon\leq \tau^*_f(u)+\epsilon.$$
    If $[y_f^t]_{\K_f^t\bigcup\{1\}}^+\in \K_{f}^t$, then we have
    $$\widehat{\tau}^t_{f}(u)=\widehat{\tau}^t_{f}([y_f^t]_{\K_f^t\bigcup\{1\}}^+)\geq \tau^*_f([y_f^t]_{\K_f^t\bigcup\{1\}}^+)-\epsilon\geq\tau^*_f(u)-\epsilon.$$
    If $[y_f^t]_{\K_f^t\bigcup\{1\}}^+=1$ and $1\notin\K_{f}^t$, we still have
    $$\widehat{\tau}^t_{f}(u)=\beta\geq uc'_f(u)=\tau^*_f(u).$$
    For each of these three cases, the induction holds.

    As $\tau_f^t(u)=\widehat{\tau}_f^t(u)+\epsilon u$ for all $f\in\F$ and $u\in[0,1]$, we have $\abs{\tau^t_{f}(u)-\tau^*_f(u)}\leq 2\epsilon$.
\end{proof}

\begin{lemma}\label{lemma:known facility error}
    For round $t$, if facility $f$ is known, then we have $\abs{\tau^t_f(y^t_f)-\tau^*_f(y^t_f)}\leq3\epsilon$.
\end{lemma}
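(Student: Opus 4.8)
The plan is to sandwich $\tau^t_f(y^t_f)$ between the values of $\tau^t_f$ at the two grid points of $\L$ that bracket $y^t_f$, bound those two values via Lemma~\ref{lemma:known index error}, and then transfer the estimate to $\tau^*_f(y^t_f)$ using that $\tau^*_f$ is monotone and oscillates little within a single grid cell.

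First I would set $u^-:=[y^t_f]^-_\L$ and $u^+:=[y^t_f]^+_\L$, so that $u^-\le y^t_f\le u^+$ and $u^+-u^-\le\Delta$ (either $y^t_f$ lies on a grid point, in which case $u^-=u^+$, or strictly between two consecutive grid points of $\L$). Since $f$ is known at round $t$, Definition~\ref{def:known} gives $u^-,u^+\in\K^t_f$. Because $\tau^t_f(u)=\widehat{\tau}^t_f(u)+\epsilon u$ with $\widehat{\tau}^t_f$ non-decreasing by construction, $\tau^t_f$ is non-decreasing, hence $\tau^t_f(u^-)\le\tau^t_f(y^t_f)\le\tau^t_f(u^+)$. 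Applying Lemma~\ref{lemma:known index error} at the known indices $u^-$ and $u^+$ then yields $\tau^*_f(u^-)-2\epsilon\le\tau^t_f(y^t_f)\le\tau^*_f(u^+)+2\epsilon$.

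It remains to replace $\tau^*_f(u^\pm)$ by $\tau^*_f(y^t_f)$ at the cost of one further $\epsilon$. By Assumption~\ref{asp:tax} the marginal cost tax $\tau^*_f$ is non-decreasing, so $\tau^*_f(u^-)\le\tau^*_f(y^t_f)\le\tau^*_f(u^+)$; and $\tau^*_f$ changes by at most $\epsilon$ over any interval of length $\Delta$, since $\tau^*_f(u)=uc'_f(u)$ has derivative $c'_f(u)+uc''_f(u)$ bounded in absolute value by $2\beta$ (using $\beta$-smoothness of $c_f$, $u\le1$, monotonicity, and the bound $c'_f\le\beta$ already invoked by the paper for $\tau^*_f(1)$), while $\Delta=1/\lceil 2\beta/\epsilon\rceil\le\epsilon/(2\beta)$. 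Hence $\tau^*_f(u^+)\le\tau^*_f(y^t_f)+\epsilon$ and $\tau^*_f(u^-)\ge\tau^*_f(y^t_f)-\epsilon$, and combining with the previous display gives $\abs{\tau^t_f(y^t_f)-\tau^*_f(y^t_f)}\le3\epsilon$.

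The only mildly delicate step is the last one — bounding the oscillation of the \emph{unobservable} optimal tax $\tau^*_f$ within one grid cell — which is exactly why the grid resolution $\Delta$ was taken proportional to $\epsilon/\beta$; everything else is bookkeeping. (In the degenerate case $u^-=u^+$, i.e.\ $y^t_f\in\K^t_f$, the claim is immediate from Lemma~\ref{lemma:known index error} with the sharper constant $2\epsilon$.)
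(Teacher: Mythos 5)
Your proof is correct and reaches the same $3\epsilon$ bound, but the final step is routed differently from the paper's. The paper exploits that $u^-=[y^t_f]^-_{\L}$ and $u^+=[y^t_f]^+_{\L}$ are consecutive knots of the piece-wise linear tax, so $\tau^t_f(y^t_f)=\lambda_1\tau^t_f(u^-)+\lambda_2\tau^t_f(u^+)$ holds \emph{exactly} with $y^t_f=\lambda_1u^-+\lambda_2u^+$; comparing this convex combination to $(\lambda_1u^-+\lambda_2u^+)c'_f(y^t_f)$, the only smoothness input needed is $\abs{c'_f(u^\pm)-c'_f(y^t_f)}\leq\beta\Delta\leq\epsilon/2$ together with $u^\pm\leq1$. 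You instead sandwich $\tau^t_f(y^t_f)$ between $\tau^t_f(u^-)$ and $\tau^t_f(u^+)$ by monotonicity and then bound the oscillation of $\tau^*_f$ itself over one grid cell. Both arguments are valid; yours is slightly more robust (it would survive extra knots of $\tau^t_f$ inside $[u^-,u^+]$, where the interpolation identity fails), while the paper's needs no bound on the magnitude of $c'_f$ at this point. Two minor caveats on your oscillation step: (i) the bound $c'_f\leq\beta$ does not follow from $\beta$-smoothness of a nondecreasing $[0,1]\to[0,1]$ function alone (one only gets $c'_f\leq1+\beta$); you are inheriting an implicit assumption the paper also makes (it underlies $\widehat{\tau}^1_f(1)=\beta\geq\tau^*_f(1)$ and case (3) of Lemma~\ref{lemma:known index error}) rather than proving it; (ii) writing $(\tau^*_f)'=c'_f+uc''_f$ presupposes twice differentiability, which is not assumed --- it is cleaner to bound $\tau^*_f(u^+)-\tau^*_f(u^-)=(u^+-u^-)c'_f(u^+)+u^-\bigl(c'_f(u^+)-c'_f(u^-)\bigr)\leq\beta\Delta+\beta\Delta\leq\epsilon$ directly. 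Neither caveat affects the conclusion.
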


\begin{proof}
    If $y_f^t\in\K_f^t$, we can directly apply Lemma \ref{lemma:known index error}. Otherwise, we set $u_1=[y^t_f]_{\L}^-$ and $u_2=[y^t_f]_{\L}^+$. Then we have $u_1<y_f^t<u_2$ and $u_1,u_2\in\K_f^t$. There exists $\lambda_1\in[0,1], \lambda_1+\lambda_2=1$ such that $y_f^t=\lambda_1u_1+\lambda_2u_2$. By Lemma \ref{lemma:gradient error}, we have $\abs{\tau_f^t(u_i)-\tau^*_f(u_i)}\leq2\epsilon$ for $i\in\{1,2\}$. Then we have
    \begin{align*}
        &\abs{\tau_f^t(y_f^t)-\tau_f(y_f^t)}\\
        =&\abs{\lambda_1\tau_f^t(u_1)+\lambda_2\tau_f^t(u_2)-(\lambda_1u_1+\lambda_2u_2)c'_f(u)}\\
        \leq&\abs{\lambda_1\tau_f^t(u_1)-\lambda_1 u_1c'_f(u)}+\abs{\lambda_2\tau_f^t(u_2)-\lambda_2 u_2c'_f(u)}\\
        \leq&\lambda_1\abs{\tau_f^t(u_1)-\tau^*_f(u_1)}+\lambda_1u_1\abs{c'_f(u_1)-c'_f(u)}+\lambda_2\abs{\tau_f^t(u_2)-\tau^*_f(u_2)}+\lambda_2u_2\abs{c'_f(u_2)-c'_f(u)}\\
        \leq&2\lambda_1\epsilon+\lambda_1\epsilon+2\lambda_2\epsilon+\lambda_2\epsilon\tag{Lemma \ref{lemma:gradient error}, $\beta$-smoothness and $|u-u_i|\leq \epsilon/\beta$.}\\
        \leq& 3\epsilon.
    \end{align*}
\end{proof}

\begin{lemma}\label{lemma:feasible tax}
    If Algorithm \ref{algo:next tax} return $\mathrm{False}$ at round $t$, then for any $\widetilde{\tau}\in\R^F$ such that $\widetilde{\tau}_{f}=\tau_f^t(y_f)$ for $f\in\F\backslash\bar{\F}^t$ and $\widetilde{\tau}_f\in[l_f^t,r_f^t]$ for $f\in\bar{\F}^t$, we have $\gap_i(x^t,c^t+\widetilde{\tau})\geq 0$ for all $i\in[m]$. In addition, $x^t$ is a Nash equilibrium for tax $\widetilde{\tau}$.
\end{lemma}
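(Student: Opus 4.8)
The plan is to first characterize exactly when Algorithm~\ref{algo:next tax} reaches the $\mathrm{False}$ branch, and then to show that the per-commodity ``worst-case'' tax built on Line~\ref{line:worst case tax} is genuinely the minimizer of $\gap_i$ over the whole feasible box $\prod_{f\in\bar{\F}^t}[l^t_f,r^t_f]$. Algorithm~\ref{algo:next tax} returns $\mathrm{False}$ only if two things hold: (i) the test on Line~\ref{line:not full load} fails, i.e. for every unknown facility $f\in\bar{\F}^t$ and every commodity $i$ we have $y_i(f)\in\{0,w_i\}$; and (ii) for every commodity $i$ its worst-case tax, call it $\bar\tau^{(i)}$, satisfies $\gap_i(x^t,c^t+\bar\tau^{(i)})\ge 0$, since otherwise some commodity would set the flag $I=\mathrm{True}$. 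Fact (i) is the structural workhorse: it gives the partition $\bar{\F}^t=\bar{\F}_i\sqcup\bar{\F}'_i$ for every $i$, and --- because $f\in\bar{\F}_i$ forces every in-support action of commodity $i$ to contain $f$ while $f\in\bar{\F}'_i$ forces every in-support action to avoid $f$ --- every in-support action $a$ (i.e. $x^t_{i,a}>0$) satisfies $a\cap\bar{\F}^t=\bar{\F}_i$, whereas off-support actions may intersect $\bar{\F}^t$ arbitrarily.

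Next I would fix a commodity $i$ and an arbitrary feasible $\widetilde\tau$ (so $\widetilde\tau_f=\tau^t_f(y^t_f)$ off $\bar{\F}^t$ and $\widetilde\tau_f\in[l^t_f,r^t_f]$ on $\bar{\F}^t$) and expand $\gap_i(x^t,c^t+\widetilde\tau)$ via Definition~\ref{def:gap}. Using $a\cap\bar{\F}^t=\bar{\F}_i$ for in-support $a$ and $\widetilde\tau=\tau^t(y^t)$ off $\bar{\F}^t$, the maximum over in-support actions equals $\sum_{f\in\bar{\F}_i}(c^t_f+\widetilde\tau_f)+M_0$, where $M_0:=\max_{a:x^t_{i,a}>0}\sum_{f\in a\setminus\bar{\F}^t}(c^t_f+\tau^t_f(y^t_f))$ is independent of $\widetilde\tau|_{\bar{\F}^t}$. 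Subtracting $\sum_{f\in\bar{\F}_i}\widetilde\tau_f$ from each off-support term and using $\sum_{f\in a\cap\bar{\F}_i}\widetilde\tau_f-\sum_{f\in\bar{\F}_i}\widetilde\tau_f=-\sum_{f\in\bar{\F}_i\setminus a}\widetilde\tau_f$, the gap rewrites as $\min_{a:x^t_{i,a}=0}\big[D(a)-\sum_{f\in\bar{\F}_i\setminus a}\widetilde\tau_f+\sum_{f\in a\cap\bar{\F}'_i}\widetilde\tau_f\big]-\sum_{f\in\bar{\F}_i}c^t_f-M_0$, where $D(a)$ is independent of $\widetilde\tau|_{\bar{\F}^t}$. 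This is linear and separable in $\widetilde\tau$, so $\min_{\widetilde\tau}\min_a=\min_a\min_{\widetilde\tau}$, and for each fixed off-support $a$ the inner minimization sets $\widetilde\tau_f=r^t_f$ on $\bar{\F}_i\setminus a$ (coefficient $-1$), $\widetilde\tau_f=l^t_f$ on $a\cap\bar{\F}'_i$ (coefficient $+1$), and is indifferent on $\bar{\F}_i\cap a$ and on $\bar{\F}'_i\setminus a$. The one genuine subtlety --- which I expect is the crux of the argument --- is to observe that substituting the single vector $\bar\tau^{(i)}$ (which uses $r^t_f$ on \emph{all} of $\bar{\F}_i$ and $l^t_f$ on \emph{all} of $\bar{\F}'_i$) into the $a$-th bracket yields exactly the per-$a$ minimum, precisely because the coordinates where $\bar\tau^{(i)}$ might ``disagree'' with the per-$a$ optimizer carry coefficient $0$ in that bracket. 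Hence $\min_{\widetilde\tau\text{ feasible}}\gap_i(x^t,c^t+\widetilde\tau)=\gap_i(x^t,c^t+\bar\tau^{(i)})\ge 0$ by (ii), so $\gap_i(x^t,c^t+\widetilde\tau)\ge 0$ for every feasible $\widetilde\tau$ and every $i$.

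Finally, for the Nash-equilibrium claim I would argue that $x^t$ is a Nash equilibrium of the game with tax function $\tau^t$ (its load is $y^t=\argmin_y\Phi(y;\tau^t)$, so Lemma~\ref{lemma:repa Phi} applies), hence all in-support actions of commodity $i$ have the same cost under $c^t+\tau^t(y^t)$; since they all meet $\bar{\F}^t$ in exactly $\bar{\F}_i$ and $\widetilde\tau$ coincides with $\tau^t(y^t)$ off $\bar{\F}^t$, the number $\sum_{f\in a\setminus\bar{\F}^t}(c^t_f+\tau^t_f(y^t_f))=M_0$ is common to all in-support $a$, so under $\widetilde\tau$ every in-support action still has the identical cost $\sum_{f\in\bar{\F}_i}(c^t_f+\widetilde\tau_f)+M_0$. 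Together with $\gap_i(x^t,c^t+\widetilde\tau)\ge 0$, which says every off-support action costs at least this common value, and since $c^t+\widetilde\tau$ is exactly the vector of facility costs at load $y^t$ under any tax taking the value $\widetilde\tau_f$ at $y^t_f$, the conditions of Definition~\ref{def:nash} hold, so $x^t$ is a Nash equilibrium for tax $\widetilde\tau$. The degenerate case $w_i=0$ (no in-support actions, so the maximum is over the empty set) makes both the gap inequality and the Nash condition vacuous and can be dispensed with at the start.
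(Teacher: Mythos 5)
Your proposal is correct and follows essentially the same route as the paper: it reduces the claim to the commodity-wise worst-case tax $\bar\tau^{(i)}$ from Line~\ref{line:worst case tax} (using that Line~\ref{line:not full load} failing forces every in-support action of commodity $i$ to intersect $\bar{\F}^t$ in exactly $\bar{\F}_i$), shows $\gap_i(x^t,c^t+\widetilde\tau)\geq\gap_i(x^t,c^t+\bar\tau^{(i)})\geq 0$, and then notes that in-support costs shift by a common amount so $x^t$ remains a Nash equilibrium. If anything, your separable-minimization bookkeeping is slightly more careful than the paper's on off-support actions that contain facilities of $\bar{\F}_i$ (where the relevant coefficients vanish), a case the paper's displayed identity glosses over though its final inequality still holds.
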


\begin{proof}
    For simplicity, we will omit $t$ when there is no confusion. Algorithm \ref{algo:next tax} return $\mathrm{False}$ if and only if for all $i\in[m]$ and tax  $\bar{\tau}_i:\bar{\tau}_{\bar{F}_i}=r_{\bar{F}_i}, \bar{\tau}_{\bar{F}'_i}=l_{\bar{F}'_i},\bar{\tau}_{F\backslash (\bar{F}_i\bigcup \bar{F}'_i)}=\tau_{F\backslash (\bar{F}_i\bigcup \bar{F}'_i)}$, we have
    $$\gap_i(x,c+\bar{\tau})=\min_{a:x_{i,a}=0}\sum_{f:f\in a}(c_f+\bar{\tau}_f)-\max_{a:x_{i,a}\neq0}\sum_{f:f\in a}(c_f+\bar{\tau}_f)\geq 0.$$
    By the definition of $\bar{F}_i$, for any $f\in\bar{F}_i$ and $a:x_{i,a}\neq0$, we have $f\in a$. Similarly, for any $f\in\bar{F}'_i$ and $a:x_{i,a}\neq0$, we have $f\notin a$. Thus for any $a:x_{i,a}\neq0$, we have
    $$\sum_{f:f\in a}(c_f+\bar{\tau}_f)-\sum_{f:f\in a}(c_f+\widetilde{\tau}_f)=\sum_{f\in\bar{F}_i}(r_f-\widetilde{\tau}_f)\geq0.$$
    For any $a:x_{i,a}=0$, we have
    $$\sum_{f:f\in a}(c_f+\bar{\tau}_f)-\sum_{f:f\in a}(c_f+\widetilde{\tau}_f)=\sum_{f\in\bar{F}'_i\bigcap a}(l_f-\widetilde{\tau}_f)\leq0.$$
    As a result, we have 
    \begin{align*}
        \gap_i(x^t,c^t+\widetilde{\tau})=&\min_{a:x_{i,a}=0}\sum_{f:f\in a}(c_f+\widetilde{\tau}_f)-\max_{a:x_{i,a}\neq0}\sum_{f:f\in a}(c_f+\widetilde{\tau}_f)\\
        \geq& \min_{a:x_{i,a}=0}\sum_{f:f\in a}(c_f+\bar{\tau}_f)-\max_{a:x_{i,a}\neq0}\sum_{f:f\in a}(c_f+\bar{\tau}_f)\geq 0.
    \end{align*}

    To prove that $x^t$ is Nash equilibrium for tax $\widetilde{\tau}$, we only need to show that for in-support actions $a:x^t_{i,a}\neq0$, the action costs 
    $\sum_{f:f\in a}(c^t_f+\widetilde{\tau}_f)$ are the same. This can be derived by
    $$\sum_{f:f\in a}(c^t_f+\tau^t_f)-\sum_{f:f\in a}(c^t_f+\widetilde{\tau}_f)=\sum_{f\in\bar{F}_i}(\tau^t_f-\widetilde{\tau}_f),\forall a:x^t_{i,a}\neq0,$$
    which is independent of $a$. As $x^t$ is Nash equilibrium for tax $\tau^t$, $\sum_{f:f\in a}(c^t_f+\tau^t_f)$ is also independent of $a$.
\end{proof}

\begin{lemma}\label{lemma:all known}
    If Algorithm \ref{algo:next tax} return $\mathrm{False}$ at round $t$, then tax $\tau^t$ is an $6F\epsilon$-optimal tax.
\end{lemma}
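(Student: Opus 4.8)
The plan is to first use Lemma~\ref{lemma:feasible tax} to upgrade the event ``Algorithm~\ref{algo:next tax} returns $\mathrm{False}$ at round $t$'' into a robust statement about the Nash strategy $x^t$, and then to bound the social cost at $y^t$ by a first-order (convexity) comparison against the optimum. Since $\tau^t_f(u)=\widehat{\tau}^t_f(u)+\epsilon u$ with $\widehat{\tau}^t_f$ nondecreasing and $c_f$ nondecreasing (Assumption~\ref{asp:cost}), the taxed cost $c_f+\tau^t_f$ has subgradient at least $\epsilon$, so Lemma~\ref{lemma:strongly convex} makes $\Phi(\cdot;\tau^t)$ $\epsilon$-strongly convex; hence $\argmin_{y\in\Y}\Phi(y;\tau^t)=\{y^t\}$, and it suffices to prove $\Psi(y^t)\le\min_{y\in\Y}\Psi(y)+6F\epsilon$.

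Next I would exhibit a vector $\widetilde{\tau}\in\R^F$ that (i) lies in the feasible box of Lemma~\ref{lemma:feasible tax}, i.e. $\widetilde{\tau}_f=\tau^t_f(y^t_f)$ for $f\notin\bar{\F}^t$ and $\widetilde{\tau}_f\in[l^t_f,r^t_f]$ for $f\in\bar{\F}^t$, so that $x^t$ is a Nash equilibrium for the constant cost vector $c^t+\widetilde{\tau}$; and (ii) is uniformly close to the marginal-cost values $[\tau^*_f(y^t_f)]_{f\in\F}$. For a known facility $f$ I take $\widetilde{\tau}_f=\tau^t_f(y^t_f)$, so $\abs{\widetilde{\tau}_f-\tau^*_f(y^t_f)}\le3\epsilon$ by Lemma~\ref{lemma:known facility error}. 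For an unknown facility $f$ I take $\widetilde{\tau}_f=\mathrm{clip}\bigl(\tau^*_f(y^t_f),l^t_f,r^t_f\bigr)$. Using $\tau^t_f(u)=\widehat{\tau}^t_f(u)+\epsilon u$, the definitions in Algorithm~\ref{algo:tax} simplify to $l^t_f=\widehat{\tau}^t_f([y^t_f]^-_{\K^t_f})+\epsilon y^t_f$ and $r^t_f=\widehat{\tau}^t_f([y^t_f]^+_{\K^t_f\bigcup\{1\}})+\epsilon y^t_f$; since $\widehat{\tau}^t_f$ is nondecreasing and $[y^t_f]^-_{\K^t_f}\le y^t_f\le[y^t_f]^+_{\K^t_f\bigcup\{1\}}$, we get $l^t_f\le r^t_f$, so the clip is well defined. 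A short case analysis bounds the clipping error: if $\tau^*_f(y^t_f)<l^t_f$, then monotonicity of $\tau^*_f$ (Assumption~\ref{asp:tax}) and Lemma~\ref{lemma:known index error} give $l^t_f\le\tau^t_f([y^t_f]^-_{\K^t_f})+\epsilon\le\tau^*_f([y^t_f]^-_{\K^t_f})+3\epsilon\le\tau^*_f(y^t_f)+3\epsilon$; if $\tau^*_f(y^t_f)>r^t_f$, then either $[y^t_f]^+_{\K^t_f\bigcup\{1\}}$ is a known index and $r^t_f\ge\tau^*_f([y^t_f]^+_{\K^t_f\bigcup\{1\}})-3\epsilon\ge\tau^*_f(y^t_f)-3\epsilon$, or $[y^t_f]^+_{\K^t_f\bigcup\{1\}}=1\notin\K^t_f$ and $r^t_f\ge\widehat{\tau}^t_f(1)=\beta\ge\tau^*_f(1)\ge\tau^*_f(y^t_f)$. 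Hence $\abs{\widetilde{\tau}_f-\tau^*_f(y^t_f)}\le3\epsilon$ for every $f$, i.e. $\Norm{[\tau^*_f(y^t_f)]_{f\in\F}-\widetilde{\tau}}_\infty\le3\epsilon$.

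Then I would run the first-order argument. Because $x^t$ is a Nash equilibrium for the constant cost vector $c^t+\widetilde{\tau}$, a per-commodity decomposition gives $\inner{c^t+\widetilde{\tau},y^t}\le\inner{c^t+\widetilde{\tau},y}$ for all $y\in\Y$: each commodity puts all its mass on cheapest actions, so $\inner{c^t+\widetilde{\tau},\phi_i(x^t_i)}=w_i\min_{a\in\A_i}\sum_{f\in a}(c^t_f+\widetilde{\tau}_f)$, which lower-bounds $\inner{c^t+\widetilde{\tau},\phi_i(x_i)}$ for every feasible $x_i$, and we sum over $i$. Since $c_f$ is $\beta$-smooth, $\Psi$ is $C^1$ with $\nabla\Psi(y^t)=[c^t_f+\tau^*_f(y^t_f)]_{f\in\F}$, and $\Psi$ is convex (apply Lemma~\ref{lemma:repa Phi} to the game with tax $\tau^*$, whose cost $c_f+\tau^*_f$ is nondecreasing under Assumptions~\ref{asp:cost}--\ref{asp:tax}, and note $\Phi^{\mathrm{repa}}(\cdot;\tau^*)=\Psi$). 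Taking $y^*\in\argmin_{y\in\Y}\Psi(y)$,
\begin{align*}
\Psi(y^t)-\Psi(y^*)&\le\inner{\nabla\Psi(y^t),\,y^t-y^*}\\
&=\inner{c^t+\widetilde{\tau},\,y^t-y^*}+\inner{[\tau^*_f(y^t_f)]_{f\in\F}-\widetilde{\tau},\,y^t-y^*}\\
&\le 0+\Norm{[\tau^*_f(y^t_f)]_{f\in\F}-\widetilde{\tau}}_\infty\Norm{y^t-y^*}_1\le 3\epsilon\cdot 2F=6F\epsilon,
\end{align*}
where $\Norm{y^t-y^*}_1\le\Norm{y^t}_1+\Norm{y^*}_1\le 2F$ because every feasible load has coordinates in $[0,1]$ over $F$ facilities. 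Combined with $\argmin_{y\in\Y}\Phi(y;\tau^t)=\{y^t\}$, this is exactly the statement that $\tau^t$ is a $6F\epsilon$-optimal tax.

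The main obstacle is the second paragraph: one has to verify that the box $\prod_{f\in\bar{\F}^t}[l^t_f,r^t_f]$ --- which is built only from the current estimate $\widehat{\tau}^t$ and the grids $\K^t_f$, with no access to $\tau^*$ --- always contains a point within $O(\epsilon)$ of the true marginal-cost tax at $y^t$. This needs the exact simplification of $l^t_f,r^t_f$, the known-index accuracy (Lemma~\ref{lemma:known index error}), monotonicity of $\tau^*_f$, and a separate treatment of the right endpoint $u=1$, where $\widehat{\tau}^t_f(1)=\beta$ is used as an a priori upper bound on $\tau^*_f$. Everything else is a standard MPEC-style comparison, enabled by the observation $\nabla\Psi(y^t)=c^t+[\tau^*_f(y^t_f)]_{f\in\F}$ and by Lemma~\ref{lemma:feasible tax}, which pins $y^t$ as a minimizer of the linearized social cost $\inner{c^t+\widetilde{\tau},\cdot}$ over $\Y$.
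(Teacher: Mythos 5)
Your proposal is correct, and its core --- showing that the feasible box $[l^t_f,r^t_f]$ from Lemma~\ref{lemma:feasible tax} contains a point within $3\epsilon$ of $\tau^*_f(y^t_f)$ for every unknown facility (via Lemma~\ref{lemma:known index error}, monotonicity of $\tau^*_f$, and the a priori bound $\widehat{\tau}^t_f(1)=\beta$ at the right endpoint) --- is exactly the paper's argument, just spelled out more explicitly via the clip construction. The only divergence is the last step: the paper concludes that $x^t$ is a $6F\epsilon$-approximate Nash equilibrium under $\tau^*$ (each action accumulates at most $3F\epsilon$ of tax error, compared across two actions) and then invokes Lemma~\ref{lemma:NE to minimizer} applied to the potential $\Phi(\cdot;\tau^*)=\Psi$, whereas you inline that conversion as a first-order convexity bound in load space, pairing the $\ell_\infty$ error $3\epsilon$ against $\Norm{y^t-y^*}_1\le 2F$ after using the variational inequality $\inner{c^t+\widetilde{\tau},y^t-y}\le 0$. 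These are the same computation in different coordinates (Lemma~\ref{lemma:NE to minimizer} is itself proved by this first-order argument), and both land on the same constant $6F\epsilon$; your version has the minor merit of making explicit that strong convexity pins $y^t$ as the unique minimizer of $\Phi(\cdot;\tau^t)$, which is needed for the ``for all $y\in\argmin\Phi(\cdot;\tau^t)$'' clause in the definition of an $\epsilon$-optimal tax.
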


\begin{proof}
For known facility $f$, by Lemma \ref{lemma:known facility error}, we have $\abs{\tau^*_f(y_f^t)-\tau_f^t(y_f^t)}\leq3\epsilon$. By Lemma \ref{lemma:known index error}, for any $u\in\K_f^t$, we have $\abs{\tau^*_f(u)-\tau_f^t(u)}\leq2\epsilon$. Thus for unknown facility $f$, we have
$$l_f^t=\tau_{f}^t([y^t_{f}]^-_{\K^t_{f}})+\epsilon\cdot(y^t_{f}-[y^t_{f}]^-_{\K^t_{f}})\leq\tau^*_f([y^t_{f}]^-_{\K^t_{f}})+2\epsilon+\epsilon=\tau^*_f([y^t_{f}]^-_{\K^t_{f}})+3\epsilon,$$
$$r_f^t=\tau^t_{f}([y^t_{f}]^+_{\K^t_{f}\bigcup\{1\}})+\epsilon\cdot(y^t_{f}-[y^t_{f}]^+_{\K^t_{f}\bigcup\{1\}})\geq\tau^*_{f}([y^t_{f}]^+_{\K^t_{f}\bigcup\{1\}})-2\epsilon-\epsilon=\tau^*_{f}([y^t_{f}]^+_{\K^t_{f}\bigcup\{1\}})-3\epsilon,$$
As $\tau^*_f$ is nondecreasing (Assumption \ref{asp:tax}), we have
$$l_f^t-3\epsilon\leq\tau^*_f([y^t_{f}]^-_{\K^t_{f}})\leq\tau^*_f(y_f^t)\leq\tau^*_{f}([y^t_{f}]^+_{\K^t_{f}\bigcup\{1\}})\leq r_f^t+3\epsilon.$$
Thus there exists tax $\widetilde{\tau}^t$ such that $\widetilde{\tau}_f^t(y_f^t)$ satisfies the condition of Lemma \ref{lemma:feasible tax} and $\abs{\tau^*_f(y_f^t)-\widetilde{\tau}_f^t(y_f^t)}\leq3\epsilon$ for all $f\in\F$. $x^t$ is the Nash equilibrium for tax $\widetilde{\tau}^t$, we have
$$\forall i\in[m], a,a'\in\A_i,\sum_{f\in a}c_f(y^t_f)+\widetilde{\tau}^t_{f}(y^t_f)\leq \sum_{f\in a'}c_f(y^t_f)+\widetilde{\tau}^t_{f}(y^t_f), \ \mathrm{if}\ x^t_{i,a}>0.$$
Thus we have
    $$\forall i\in[m], a,a'\in\A_i,\sum_{f\in a}c_f(y^t_f)+\tau^*_{f}(y^t_f)\leq \sum_{f\in a'}c_f(y^t_f)+\tau^*_{f}(y^t_f)+6F\epsilon, \ \mathrm{if}\ x^t_{i,a}>0.$$
By Lemma \ref{lemma:NE to minimizer}, $\Psi(y_f^t)-\min_{y\in\Y}\Psi(y)\leq 6F\epsilon.$
\end{proof}

\begin{lemma}\label{lemma:y diff}
    If Algorithm \ref{algo:next tax} output $\widetilde{\tau}^t,\widetilde{f}^t,\mathrm{sign}^t$ at round $t$, then we have
    $$0< \abs{y_{\widetilde{f}^t}^t-\dot{y}_{\widetilde{f}^t}^t}\leq\Delta.$$
\end{lemma}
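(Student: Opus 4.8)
The plan is to establish the two inequalities $|y^t_{\widetilde f^t} - \dot y^t_{\widetilde f^t}| > 0$ and $|y^t_{\widetilde f^t} - \dot y^t_{\widetilde f^t}| \le \Delta$ separately, using the structure of how $\widetilde\tau^t$, $\widetilde f^t$ and $\mathrm{sign}^t$ are produced by Algorithm~\ref{algo:next tax} together with the strong convexity guaranteed by Lemma~\ref{lemma:strongly convex}. First I would recall what the algorithm returns: it outputs a triple $(\widetilde\tau^t, \widetilde f^t, \mathrm{sign}^t)$ at one of three lines (Line~\ref{line 1}, Line~\ref{line 2}, Line~\ref{line 3}), and in each case $\widetilde\tau^t$ is a ``boundary'' tax, meaning $x^t$ is still a Nash equilibrium under $\widetilde\tau^t$ but an infinitesimal perturbation of size $\mathrm{sign}^t\cdot\delta$ at facility $\widetilde f^t$ tips some gap $\mathrm{gap}_i(x^t,\,\cdot\,)$ strictly negative. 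The key observation is that $\dot\tau^t$ differs from $\widetilde\tau^t$ (interpolated into the piecewise-linear tax) only at facility $\widetilde f^t$.

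For the strict positivity $|y^t_{\widetilde f^t} - \dot y^t_{\widetilde f^t}| > 0$: I would argue that the Nash loads $y^t$ (under $\widetilde\tau^t$, which induces the same load as $\tau^t$ since $x^t$ is a Nash equilibrium for both — here I would need to confirm $y^t$ is the load under $\widetilde\tau^t$, which follows because $\widetilde\tau^t$ agrees with $\tau^t$ on known facilities and $x^t$ remains a Nash equilibrium) and $\dot y^t$ (under $\dot\tau^t$) are genuinely different, i.e. $y^t \ne \dot y^t$. This is precisely because $\widetilde\tau^t$ is on the boundary: by the construction of $u$ as an $\argmax$/$\argmin$ of the feasible-gap set, pushing the tax at $\widetilde f^t$ by $\mathrm{sign}^t\cdot\delta$ makes $\mathrm{gap}_i(x^t, c^t+\dot\tau^t) < 0$ for some $i$, so $x^t$ is no longer a Nash equilibrium under $\dot\tau^t$, forcing the (unique, by $\epsilon$-strong convexity) Nash load to change: $y^t \ne \dot y^t$. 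Then since $\dot\tau^t$ and $\widetilde\tau^t$ differ only in facility $\widetilde f^t$, Lemma~\ref{lemma:different load} gives exactly $y^t_{\widetilde f^t} \ne \dot y^t_{\widetilde f^t}$, hence the strict lower bound. (In the Line~\ref{line 1} case where a commodity is split across $\widetilde f^t$, I would argue similarly that an additional $\delta$ perturbation strictly breaks the cost tie between in-support actions using and not using $\widetilde f^t$.)

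For the upper bound $|y^t_{\widetilde f^t} - \dot y^t_{\widetilde f^t}| \le \Delta$: I would use $\epsilon$-strong convexity of the potential function $\Phi(\,\cdot\,;\dot\tau^t)$ (Lemma~\ref{lemma:strongly convex}, applicable since every enforced tax plus cost has subgradient at least $\epsilon$) to turn the small size $\delta = \epsilon\Delta^2/8$ of the tax perturbation into a small displacement of the minimizer. Concretely, $y^t$ is the minimizer of $\Phi(\,\cdot\,;\widetilde\tau^t)$ (after the piecewise-linear update at $y^t_{\widetilde f^t}$, which by Proposition/Lemma structure changes the tax value at the sampled point by at most $O(\epsilon)$, but crucially keeps $y^t$ a Nash equilibrium) and $\dot y^t$ the minimizer of $\Phi(\,\cdot\,;\dot\tau^t)$; the two potentials differ by a function whose gradient differs only in coordinate $\widetilde f^t$ and by at most $\delta$ there (plus the clip/interpolation effects, which I would bound). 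Standard strong-convexity perturbation analysis — comparing first-order optimality conditions, using $\langle \nabla\Phi(y^t;\dot\tau^t) - \nabla\Phi(\dot y^t;\dot\tau^t),\, y^t - \dot y^t\rangle \ge \epsilon\|y^t-\dot y^t\|_2^2$ and bounding the left side by the perturbation magnitude times $\|y^t-\dot y^t\|_2$ — yields $\|y^t - \dot y^t\|_2 \le \delta/\epsilon$ or a comparable bound. Plugging $\delta = \epsilon\Delta^2/8$ gives a displacement of order $\Delta^2 \le \Delta$, comfortably inside the required bound.

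The main obstacle I anticipate is bookkeeping the relationship between the three objects $\tau^t$, $\widetilde\tau^t$, and $\dot\tau^t$ precisely — in particular verifying that $y^t$ really is the Nash load both before and after interpolating $(y^t_{\widetilde f^t}, \widetilde\tau^t_{\widetilde f^t})$ into the piecewise-linear tax, so that the ``before'' load in the perturbation argument is genuinely $y^t$, and handling the Line~\ref{line 1} case (where the exploratory tax equals the unperturbed $\tau$ and the boundary condition comes from a commodity being fractionally split rather than from an $\argmax$ computation) on the same footing as the other two cases. Once the perturbation between $\dot\tau^t$ and the tax under which $y^t$ is optimal is correctly pinned down to have magnitude $\le \delta$ in a single coordinate, both the strong-convexity upper bound and the Lemma~\ref{lemma:different load} lower bound follow routinely.
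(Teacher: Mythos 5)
Your proposal is correct and follows the same skeleton as the paper's proof: strict positivity comes from the ``boundary'' property of $\widetilde\tau^t$ (the $\delta$-perturbation at $\widetilde f^t$ makes some gap negative, or breaks the in-support cost tie in the Line~\ref{line 1} case, so $x^t$ is no longer an equilibrium and the unique Nash load must move), combined with Lemma~\ref{lemma:different load} to localize the change to coordinate $\widetilde f^t$; and the upper bound comes from $\epsilon$-strong convexity of the reparameterized potential together with the fact that $\dot\tau^t$ differs from the tax under which $y^t$ is optimal by exactly the constant $\mathrm{sign}\cdot\delta$ on a single facility. The one place you diverge is the upper-bound computation: the paper compares \emph{function values}, sandwiching $\Phi(\dot y^t)-\Phi(y^t)>\epsilon\Delta^2/2$ against the pointwise bound $|\Phi(\cdot;c+\tau^t)-\Phi(\cdot;c+\dot\tau^t)|\le\delta$ to get the contradiction $2\delta>\epsilon\Delta^2/2$, whereas you compare \emph{first-order optimality conditions} of the two constrained minimizers, yielding $\|y^t-\dot y^t\|_2\le\delta/\epsilon=\Delta^2/8$. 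Both are valid; yours gives a quadratically tighter displacement bound, while the paper's value-based argument avoids invoking the variational inequality over $\Y$. The bookkeeping issue you flag---that $y^t$ must be verified to remain the Nash load after interpolating $(y^t_{\widetilde f^t},\widetilde\tau^t_{\widetilde f^t})$ into the piecewise-linear tax---is indeed the step the paper handles by introducing the auxiliary tax $\ddot\tau^t$ and noting that the gaps stay nonnegative by construction of $u$ while in-support costs shift uniformly; your sketch of why this holds is the right one.
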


\begin{proof}
    First, we prove $\abs{y_{\widetilde{f}^t}^t-\dot{y}_{\widetilde{f}^t}^t}>0$. We consider the following two cases.
    
    (1) Algorithm \ref{algo:next tax} return at Line \ref{line 1}. As we have $0<y_i(\widetilde{f}^t)<w_i$, there exists $a, a'\in \A_i$ such that $x_{i,a}>0,x_{i,a'}>0$ and $\widetilde{f}^t\in a, \widetilde{f}^t\notin a'$. Suppose $y_{\widetilde{f}^t}^t=\dot{y}_{\widetilde{f}^t}^t$. Then by Lemma \ref{lemma:different load}, we have $y^t=\dot{y}^t$ as $\tau^t$ and $\dot{\tau}^t$ only differ in facility $\widetilde{f}^t$. As a result, $x^t$ is Nash equilibrium for tax $\dot{\tau}^t$. However, $x^t$ is the Nash equilibrium for tax $\tau_f^t$ implies
    $$\sum_{f\in a}c_{f}(y_{f}^t)+\tau_{f}^t(y_{f}^t)=\sum_{f\in a'}c_f(y_{f}^t)+\tau_{f}^t(y_{f}^t).$$
    As $\tau^t$ and $\dot{\tau}^t$ only differ in facility $\widetilde{f}^t$ and $\widetilde{f}^t\in a, \widetilde{f}^t\notin a'$, we have
    $$\sum_{f\in a}c_{f}(y_{f}^t)+\dot{\tau}_{f}^t(y_{f}^t)\neq\sum_{f\in a'}c_{f}(y_{f}^t)+\dot{\tau}_{f}^t(y_{f}^t),$$
    which means $x^t$ is not the Nash equilibrium for tax $\dot{\tau}$. By contradiction, we have $y_{\widetilde{f}^t}^t=\dot{y}_{\widetilde{f}^t}^t$.

    (2) Algorithm \ref{algo:next tax} return $\widetilde{\tau}^u,\widetilde{f},\mathrm{sign}$ at Line \ref{line 2} or Line \ref{line 3}. As there exists $j\in[m]$ such that $\gap_j(x,c+\widetilde{\tau}^{u+\mathrm{sign}\cdot\epsilon})<0$, $x$ is not a Nash equilibrium under tax $\dot{\tau}^t$. Let $\ddot{\tau}^t:\ddot{\tau}^t_f=\tau^t_f\bigcup(y^t_f,\widetilde{\tau}^u_{f})$ for $f\in\F$. Then $\dot{\tau}^t$ and $\ddot{\tau}^t$ only differs in $\widetilde{f}$ and $x$ is the Nash equilibrium under tax $\ddot{\tau}^t$. By applying Lemma \ref{lemma:different load} with $\dot{\tau}^t$ and $\ddot{\tau}^u$, we have $y_{\widetilde{f}^t}^t=\dot{y}_{\widetilde{f}^t}^t$.

    Second, we prove $\abs{y_{\widetilde{f}^t}^t-\dot{y}_{\widetilde{f}^t}^t}\leq\Delta$.
    (1) Algorithm \ref{algo:next tax} return at Line \ref{line 1}. Suppose we have $\abs{y_{f}^t-\dot{y}_{f}^t}>\Delta$. By the tax design, the (sub)gradient of the tax $(\tau_f^t)'(u)\geq\epsilon$ for $u\in[0,1]$. As a result, $\Phi(y,c+\tau^t)$ is $\epsilon$-strongly convex by Lemma \ref{lemma:strongly convex}. As $y^t=\argmin_{y\in\Y}\Phi(y;c+\tau^t)$, we have
    $$\Phi(\dot{y}^t;c+\tau^t)-\Phi(y^t;c+\tau^t)> \epsilon\Delta^2/2.$$
    However, we have $\abs{\Phi(y;c+\tau^t)-\Phi(y;c+\dot{\tau}^t)}\leq \delta$ for all $y\in\Y$. Thus we have
    $$\Phi(\dot{y}^t;c+\dot{\tau}^t)-\delta\leq\Phi(y^t;c+\dot{\tau}^t)-\delta/2\leq\Phi(y^t;c+\tau^t)\leq\Phi(\dot{y}^t;c+\tau^t)\leq\Phi(\dot{y}^t;c+\dot{\tau}^t)+\delta.$$
    Comparing to the inequality above, we have
    $2\delta>\epsilon\Delta^2/2$,
    which is incorrect by the definition of $\delta$. By contradiction, we have $\abs{y_{f}^t-\dot{y}_{f}^t}\leq\Delta$.
    
    (2) Algorithm \ref{algo:next tax} return $\widetilde{\tau}^u,\widetilde{f},\mathrm{sign}$ at Line \ref{line 2} or Line \ref{line 3}. Let $\ddot{\tau}^t:\ddot{\tau}^t_f=\tau^t_f\bigcup(y^t_f,\widetilde{\tau}^u_{f})$ for $f\in\F$. Then $x$ is the Nash equilibrium under tax $\ddot{\tau}^t$. Let $\ddot{\tau}:\ddot{\tau}_f^t=\tau_f^t\bigcup(y_f^t,\widetilde{\tau}_f)$ for all $f\in\F$. By the definition of $\ddot{\tau}^t$ and the feasible range $\widetilde{\tau}_f\in[l_f,r_f]$, the subgradient of $\ddot{\tau}^t_f$ is lower bounded by $\epsilon$. As a result, $\Phi(\cdot;c+\ddot{\tau}^t)$ is $\epsilon$-strongly convex on $[0,1]$. We can prove $\abs{y_{f}^t-\dot{y}_{f}^t}\leq\Delta$ by following the analysis for case (1) and replacing $\tau^t$ with $\ddot{\tau}^t$.

    
\end{proof}

\ft*

\begin{proof}
    This is directly from Lemma \ref{lemma:all known} and Lemma \ref{lemma:y diff}.
\end{proof}

\begin{lemma}\label{lemma:stop}
    Algorithm \ref{algo:tax} return $\mathrm{False}$ in at most $KF$ rounds.
\end{lemma}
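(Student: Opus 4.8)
The plan is to run a monovariant argument on the total size of the known‑index sets, $N_t:=\sum_{f\in\F}\abs{\K^t_f}$: I will show that $N_t$ increases by at least one on every round the algorithm does not halt, while staying within a fixed integer range, so only finitely many rounds can occur.

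First I would pin down the range of $N_t$. By Line \ref{line:init}, $\K^1_f=\{0\}$ for every $f$, so $N_1=F$; moreover $0\in\K^t_f$ for all $t$, since the known sets are only ever enlarged. An easy induction on $t$ using \eqref{eq:update 2} gives $\K^t_f\subseteq\L$ for all $t,f$, because the only candidates ever inserted are $[y^t_{\widetilde{f}^t}]^-_{\L}$ and $[y^t_{\widetilde{f}^t}]^+_{\L}$, both of which lie in $\L$ by definition of $[\cdot]^{\pm}_{\L}$. Consequently at most $\abs{\L\setminus\{0\}}=K$ new indices can be inserted into any single $\K_f$, so at most $KF$ insertions occur over the whole run.

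Next I would show that on any round $t$ at which Algorithm \ref{algo:next tax} does not return $\mathrm{False}$, at least one new index is inserted, i.e.\ $N_{t+1}\ge N_t+1$. In that case Algorithm \ref{algo:next tax} returns a triple $\widetilde{\tau}^t,\widetilde{f}^t,\mathrm{sign}^t$ with $\widetilde{f}^t$ drawn from $\bar{\F}$, an \emph{unknown} facility; by Definition \ref{def:known} at least one of $[y^t_{\widetilde{f}^t}]^-_{\L},[y^t_{\widetilde{f}^t}]^+_{\L}$ fails to lie in $\K^t_{\widetilde{f}^t}$, so the update \eqref{eq:update 2} adjoins that index (and possibly its partner) to $\K^t_{\widetilde{f}^t}$ while leaving the other known sets untouched. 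Hence $N_t$ strictly increases, and since only $KF$ insertions are available, the exploration branch can be taken at most $KF$ times.

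It remains to check that once no further insertion is possible the algorithm indeed halts. At that point $\K^t_f=\L$ for every $f$, so $[y^t_f]^-_{\L},[y^t_f]^+_{\L}\in\K^t_f$ and every facility is known, i.e.\ $\bar{\F}=\emptyset$; then in Algorithm \ref{algo:next tax} the test of Line \ref{line:not full load} fails vacuously, and for every commodity $i$ the ``worst-case tax'' built in Line \ref{line:worst case tax} coincides with the input $\tau$, whence $\gap_i(x,c+\bar{\tau})=\gap_i(x,c+\tau)\ge 0$ because $x=x^t$ is a Nash equilibrium under $\tau^t$; thus the flag $I$ stays $\mathrm{False}$ and Algorithm \ref{algo:next tax} returns $\mathrm{False}$. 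Combining the three steps, Algorithm \ref{algo:tax} returns within $KF$ rounds. The only point requiring care is the bookkeeping in the second step — verifying that the inserted index is genuinely new and genuinely different from $0$ — which is precisely where the definition of an unknown facility and the permanence of $0\in\K^t_f$ are used; everything else is counting.
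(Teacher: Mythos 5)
Your proof is correct and follows essentially the same route as the paper: a counting argument showing that each non-halting round strictly increases $\sum_{f\in\F}\abs{\K^t_f}$ (because the returned facility $\widetilde{f}^t$ is unknown, so $\{[y^t_{\widetilde{f}^t}]^-_{\L},[y^t_{\widetilde{f}^t}]^+_{\L}\}\setminus\K^t_{\widetilde{f}^t}$ is nonempty), while each $\K^t_f$ stays inside the grid $\L$ of size $K+1$ and starts at $\{0\}$. Your additional closing step — that once every facility is known the worst-case tax coincides with $\tau^t$ and the gap test forces a $\mathrm{False}$ return — is a small extra piece of rigor the paper omits, but it does not change the argument.
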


\begin{proof}
    By Lemma \ref{lemma:y diff} and the update rule \eqref{eq:update 1}, if Algorithm \ref{algo:next tax} return $\widetilde{\tau}^t,f,\mathrm{sign}$ at round $t$, then we will have one more known point, i.e., $\sum_{f\in\F}\K^{t+1}_f= \sum_{f\in\F}\K^{t}_f+1$. As $\K^t_f\subseteq \L$ for all $f\in\F$ and $|\L|=K+1$, we proved the lemma.

\end{proof}

\thm*

\begin{proof}
    The proof is directly from Proposition \ref{prop:algo2} and Lemma \ref{lemma:stop}.
\end{proof}

\section{Computation Complexity}\label{sec:comp}

In this section, we discuss the computation complexity of Algorithm \ref{algo:tax} and Algorithm \ref{algo:next tax}. We will show that these two algorithms can be implemented with $\widetilde{O}(\mathrm{poly}(A,F,m))$ complexity for each round. For network congestion games, the computation complexity can be sharpened to $\widetilde{O}(\mathrm{poly}(V,E,m))$, avoiding the dependence on $A$ that can be exponential in $V$ and $E$. 

\subsection{General Congestion Games}

For Algorithm \ref{algo:tax}, we compute/update the value of the cost/tax function for each facility. As we use the dictionary data structure, computing value and updating value only have $O(\log K)=O(\log \beta/\epsilon)$ complexity. As a result, the complexity of one round in Algorithm \ref{algo:tax} is $\widetilde{O}(F)$. 

For Algorithm \ref{algo:next tax}, $x\in\phi^{-1}(y)$ is a Caratheodory decomposition problem and can be formulated as a linear program with $A$ variables, $F+m$ equation constraints and $A$ inequality constraints (Proposition \ref{prop:lp}), which can be solved in polynomial time \citep{cohen2021solving}. 

The bottleneck is in computing $u=\argmax_{u}\{u:\gap_j(x,c+\widetilde{\tau}^u)\geq0,\forall j\in[m]\}$ for $\widetilde{\tau}^u:\widetilde{\tau}^u_{f}=u,\widetilde{\tau}^u_{\F\backslash \{f\}}=\tau'_{\F\backslash \{f\}}$, $f\in\bar{F}_i$. For simplicity, we use the notation: $\widetilde{c}^u=c+\widetilde{\tau}^u$ as the cost with tax $\widetilde{\tau}^u$. By Definition \ref{def:gap} and the definition of action cost, we have
\begin{align}
    &\gap_j(x,c+\widetilde{\tau}^u)
    =\min_{a:x_{j,a}=0}\widetilde{c}_a^u-\max_{a:x_{j,a}\neq0}\widetilde{c}_a^u.\label{eq:gap}
\end{align}
For action cost $\widetilde{c}_a^u$, if $f\in a$, it is a linear function w.r.t. $u$ in the form of $u+C$ for some constant $C$. Otherwise, it is a constant w.r.t. $u$. As a result, we can determine the function $\widetilde{c}_a^u$ with $O(F)$ computation as we only need to compute $\widetilde{c}_a^{\tau'_{f}}$ to decide the constant. Then we can compute $\gap_j(x,c+\widetilde{\tau}^u)$ in closed form and compute $u_j=\argmax_{u}\{u:\gap_j(x,c+\widetilde{\tau}^u)\geq0\}$ with $O(AF)$ complexity. Finally, $u=\min_{j\in[m]}u_j$ can be computed with $\O(mAF)$ complexity. Similarly, $u=\argmin_{u}\{u:\gap_j(x,c+\widetilde{\tau}^u)\geq0,\forall j\in[m]\}$ has $\widetilde{O}(mAF)$ computation complexity.

\subsection{Network Congestion Games}

For network congestion games, Algorithm \ref{algo:next tax} can be implemented by applying shortest path algorithms on a modified network, thus avoiding the dependence on $A$. We will apply Dijkstra's algorithm with $\widetilde{O}(V+E)$ complexity while other shortest path algorithms can be used as well.

First, the Caratheodory decomposition $x\in\phi^{-1}(y)$ can be done efficiently with $O(VE+E^2)$ steps similar to the decomposition algorithm in \citep{panageas2023semi}. While their algorithm is for the flow polytopes with one commodity, it can be directly generalized to the multi-commodity case. We defer the algorithm and analysis to Appendix \ref{apx:comp}.

For $u=\argmax_{u}\{u:\gap_j(x,c+\widetilde{\tau}^u)\geq0,\forall j\in[m]\}$, the computation complexity can be boosted to $\widetilde{O}(m(E+V))$. To achieve this, we consider how \eqref{eq:gap} changes as $u$ increases from $\tau'_f$ to $r_f$. By Algorithm \ref{algo:next tax}, we have $\gap_j(x,c+\widetilde{\tau}^u)\geq0$ when $u=\tau'_f$ as otherwise the algorithm ends at the previous iteration. In addition, facility $f$ either has none of the Nash load or has all of the Nash load for facility $j$ according to the algorithm design. For the first case, the in-support action costs will not change as $u$ increases. $\gap_j(x,c+\widetilde{\tau}^u)\geq0$ always holds as the off-support action costs are nondecreasing w.r.t. $u$. 

For the second case (all in-support actions use $f$), the in-support action costs take the form of $u+C$ and $C$ can be determined by applying shortest path algorithm with edge weight $c+\widetilde{\tau}^{\tau'_f}$. For off-support action cost, we observe that 
\begin{align}
    &\min_{a:x_{j,a}=0}\widetilde{c}_a^u
    =\min\bigl\{\min_{a:x_{j,a}=0,f\in a}\widetilde{c}_a^u,\min_{a:x_{j,a}=0,f\notin a}\widetilde{c}_a^u\bigr\}
    =\min\bigl\{\min_{a:x_{j,a}=0,f\in a}\widetilde{c}_a^u,\min_{a:f\notin a}\widetilde{c}_a^{\tau'_f}\bigr\},\label{eq:min}
\end{align}
where the second equation is from that the action cost does not depend on $u$ and $x_{j,a}=0$ if $f\notin a$. The first term in \eqref{eq:min} grows linearly w.r.t. $u$ as $\widetilde{f}\in a$, so it is always larger than the in-support action cost. The second term in \eqref{eq:min} is the shortest path length for commodity $j$ that does not use facility $f$, which can be computed as the shortest path in the network after removing edge $f$. As a result, $u_j=\argmax_{u}\{u:\gap_j(x,c+\widetilde{\tau}^u)\geq0\}$ can be computed with $O(E+V)$ complexity. Then the complexity for computing $u=\min_{j\in[m]}u_j$ is $\widetilde{O}(m(E+V))$.

Similarly, $u_j=\argmin_{u}\{u:\gap_j(x,c+\widetilde{\tau}^u)\geq0\}$ can be reduced to solving the shortest path that must use edge $f$ in the network. We consider how \eqref{eq:gap} changes as $u$ decreases from $\tau_f$ to $l_f$. Initially, the gap is nonnegative. If $f$ has all of the Nash load, then the in-support action cost is a linear function $u+C$ and it decreases at least as fast as the first term. As a result, the gap is always nonnegative. Otherwise, $f$ has none of the Nash load and in-support action costs remain constant.

We notice the following equation:
\begin{align}
    &\min_{a:x_{j,a}=0}\widetilde{c}_a^u
    =\min\bigl\{\min_{a:x_{j,a}=0,f\in a}\widetilde{c}_a^u,\min_{a:x_{j,a}=0,f\notin a}\widetilde{c}_a^u\bigr\}
    =\min\bigl\{\min_{a:f\in a}\widetilde{c}_a^u,\min_{a:x_{j,a}, f\notin a}\widetilde{c}_a^{\tau'_f}\bigr\},\label{eq:min 2}
\end{align}
where the second equation is from that $f\in a$ implies $a$ is off-support ($x_{j,a}=0$) and $f\notin a$ implies $\widetilde{c}_a^u$ is independent of $u$.
The second term in \eqref{eq:min 2} is a constant and is always greater than the in-support action cost. The first term in \eqref{eq:min 2} is a linear function $u$ and it can be determined by computing the shortest path that always uses $\widetilde{f}$ and with edge weights $c+\widetilde{\tau}^{\tau_f}$. This subproblem can be solved by applying the shortest path algorithm twice: the first one is to connect the source node and the starting node of $\widetilde{f}$, and the second one is to connect the end node of $\widetilde{f}$ and the target node. As a result, the complexity for $u=\max_{j\in[m]}u_j$ is $\widetilde{O}(m(E+V))$ as well. Thus the computation complexity for Algorithm \ref{algo:next tax} in network congestion games is $O(VE+E^2+mV+mE)$.

\section{Missing Proofs in Section \ref{sec:comp}}\label{apx:comp}

\begin{proposition}\label{prop:lp}
    Finding $x\in\phi^{-1}(y)$ can be formulated as the following linear program.
    \begin{align*}
        &\min_{x\in\R^A} 1\\
        \text{s.t. }& y=\sum_{i\in[m]}\sum_{a_i\in\A_i}x_{i,a_i}a_i\\
        & w_i=\sum_{a_i\in\A_i}x_{i,a_i},\forall i\in[m]\\
        & x_{i,a_i}\geq 0, \forall i\in[m],a_i\in\A_i
    \end{align*}
\end{proposition}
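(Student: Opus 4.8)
The plan is to unwind the definitions of $\X$, $\phi$, and $\phi^{-1}(y)$ and verify that, once each action $a\subseteq\F$ is identified with its $0/1$ indicator vector $\big(\1\{f\in a\}\big)_{f\in\F}\in\{0,1\}^F$, the feasible region of the displayed program is exactly $\phi^{-1}(y)$. Since the objective is the constant function $1$, every feasible point is optimal, so ``solving the linear program'' and ``finding a point of $\phi^{-1}(y)$'' are literally the same task; in particular a polynomial-time LP solver produces the desired decomposition whenever one exists, which is the case in our setting because the input $y$ to Algorithm~\ref{algo:next tax} is a Nash load and hence lies in $\Y$.

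Concretely, I would argue in two steps. First, recall that $x\in\X$ means that for every commodity $i\in[m]$ the block $x_i=(x_{i,a})_{a\in\A_i}$ satisfies $x_{i,a}\ge 0$ for all $a\in\A_i$ and $\sum_{a\in\A_i}x_{i,a}=w_i$; these two conditions together already force $x_{i,a}\le w_i$, so the box membership $x_i\in[0,w_i]^{|\A_i|}$ is implied and need not be listed as a separate constraint. This is precisely the content of the second and third constraint groups of the proposition. Second, with the indicator-vector identification above, the $f$-th coordinate of $\sum_{i\in[m]}\sum_{a_i\in\A_i}x_{i,a_i}a_i$ equals $\sum_{i\in[m]}\sum_{a\in\A_i:\,f\in a}x_{i,a}=l_f(x)=\big(\phi(x)\big)_f$, so the vector equality $y=\sum_{i}\sum_{a_i}x_{i,a_i}a_i$ is exactly the statement $\phi(x)=y$.

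Combining the two steps, $x\in\R^A$ is feasible for the program if and only if $x\in\X$ and $\phi(x)=y$, i.e.\ if and only if $x\in\phi^{-1}(y)$; and since the objective is constant, the set of optimal solutions coincides with the feasible set, which finishes the argument. I do not expect any genuine obstacle here: the only point that warrants a moment's care is the notational identification of an action (a subset of $\F$) with its indicator vector in $\R^F$, together with the one-line check that the corresponding matrix--vector product reproduces the facility loads $l_f(x)$; everything else is a direct transcription of the definitions of $\X$ and $\phi$.
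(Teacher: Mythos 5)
Your argument is correct and matches the paper's proof: both simply observe that the second and third constraint groups encode $x\in\X$, the first (with actions read as indicator vectors) encodes $\phi(x)=y$, and the constant objective makes every feasible point a valid output. Your version just spells out the indicator-vector identification and the implied bound $x_{i,a}\le w_i$ more explicitly than the paper does.
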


\begin{proof}
    The second and third constraints guarantees $x$ is a feasible strategy. The first constraint indicates $y=\phi(x)$. As a result, any feasible point of the program is a solution of $\phi^{-1}(y)$.
\end{proof}

\begin{algorithm}[ht!]
    \caption{Efficient Computation of Flow Decomposition (Modified from \citep{panageas2023semi})}
    \label{algo:flow decomp}
    \begin{algorithmic}[1]
        \State {\bfseries Input}: A load $y\in\Y$.
        \State $x_{i,a}=0$ for all $i\in[m]$ and $a\in\A_i$.
        \While{$\exists f:y_f>0$}
        \State Let $A=\{f:y_f>0\}$.
        \State Let $f_{\min}=\argmin_{f\in A}y_f$ and $y_{\min}=\min_{f\in A}y_f$.
        \State Let $a$ be a $(s_i,t_i)$ path of network $G(V,A)$ with $f_{\min}\in p$.\label{step:path}
        \State Let $x_{i,a}=y_{\min}$, $y_f=y_f-y_{\min}$ if $f\in a$. \label{step:remove}
        \EndWhile
    \end{algorithmic}
\end{algorithm}

\begin{proposition}\label{prop:flow decomp}
    Algorithm \ref{algo:flow decomp} can output a Caratheodory decomposition of $y$ within $E$ steps.
\end{proposition}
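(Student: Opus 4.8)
The plan is to establish two things separately: that the \textbf{while}-loop of Algorithm~\ref{algo:flow decomp} halts after at most $E=F$ iterations, and that the strategy $x$ it returns lies in $\phi^{-1}(y)$, i.e.\ it is a feasible strategy with $\phi(x)=y$ (equivalently, a Caratheodory-type decomposition of $y$ into at most $E$ path flows). I read the pseudocode as also carrying along, for bookkeeping, the per-commodity residuals $y_i\in\R^E$ with $y=\sum_{i\in[m]}y_i$, so that the commodity $i$ used in Line~\ref{step:path} is one with $y_{i,f_{\min}}>0$; such an $i$ exists because $\sum_i y_{i,f_{\min}}=y_{f_{\min}}>0$.

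For termination, the key observation is that one pass through the loop strictly shrinks the support $A=\{f:y_f>0\}$. Indeed $y_{\min}=y_{f_{\min}}$ and $f_{\min}\in a$, so after Line~\ref{step:remove} we have $y_{f_{\min}}=0$; moreover no coordinate of $y$ ever increases, since we only subtract the nonnegative vector $y_{\min}\1_a$, and this keeps $y\ge 0$ because every facility on the path $a$ lies in $A$ and hence carries load at least $y_{\min}$. So $|A|$ drops by at least one each iteration, and since $|A|\le E$ initially the loop runs at most $E$ times.

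For correctness I would maintain the loop invariant that after every iteration (i) each residual $y_i$ is a nonnegative $(s_i,t_i)$ flow of some value $w_i'\le w_i$ (flow conservation at every internal vertex, net divergence $\pm w_i'$ at $s_i,t_i$), and (ii) $\bar y=\phi(x)+\sum_i y_i$, where $\bar y$ denotes the input load and $x$ the current accumulator. Both parts are preserved by one iteration: (i) because subtracting an $s_i$--$t_i$ path flow leaves internal-vertex divergences untouched and lowers the divergences at $s_i$ and $t_i$ equally, with nonnegativity preserved as above; and (ii) because Line~\ref{step:remove} moves exactly $y_{\min}\1_a$ out of $y_i$ and records it in $x_{i,a}$. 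Granting the invariant, Line~\ref{step:path} is well defined: by the flow-decomposition theorem a nonnegative $(s_i,t_i)$ flow can carry positive flow on $f_{\min}$ only if $f_{\min}$ lies on a directed $s_i$--$t_i$ path inside its support, which is contained in $G(V,A)$. Finally, at loop exit $y=0$, so by (i) every $y_i=0$, meaning the whole demand $w_i$ has been routed along $s_i$--$t_i$ paths into $x$, whence $\sum_{a\in\A_i}x_{i,a}=w_i$; and (ii) collapses to $\bar y=\phi(x)$. Thus $x\in\phi^{-1}(y)$, supported on at most $E$ paths.

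The step I expect to be the real obstacle is invariant (i), and in particular justifying Line~\ref{step:path} in the multicommodity setting. Two points need care. First, one must ensure the amount $y_{\min}$ peeled along a single commodity's path cannot drive any $y_i$ negative; with the aggregate choice of $y_{\min}$ this pins down how the commodity $i$ (and the path/amount) in Line~\ref{step:path} must be selected, which is exactly the place where the single-commodity routine of \citet{panageas2023semi} has to be generalized carefully. Second, one must rule out that the positive residual on $f_{\min}$ is trapped on a cycle with no connection to $s_i$ or $t_i$; the clean fix is to strengthen the invariant so that each $y_i$ stays a genuine cycle-free path flow (it is coordinate-wise dominated by the path flow it started from, and one re-decomposes), after which the flow-decomposition argument above applies verbatim. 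The remaining verifications --- the two invariant steps and the final mass accounting --- are routine.
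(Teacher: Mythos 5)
Your argument is correct and follows essentially the same route as the paper's proof: termination because each iteration zeroes out $y_{f_{\min}}$ and never increases any coordinate, so the support $A$ strictly shrinks and the loop runs at most $E$ times; and well-definedness of Line~\ref{step:path} via the flow-decomposition theorem applied to the residual load, which remains a conservation-respecting flow. The one place you go beyond the paper is the per-commodity bookkeeping: the paper argues only at the aggregate level (``$y$ remains a multi-commodity flow''), and the over-peeling concern you flag --- that the aggregate minimum $y_{\min}$ could exceed what the chosen commodity actually carries along the selected path --- is glossed over in the paper's proof as well, so your strengthened invariant is a refinement of, not a departure from, the published argument.
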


\begin{proof}
    During the algorithm, load $y$ will always be nonnegative: $y_f\geq0,\forall f\in\F$. For each round, we will have $y_{f_{min}}$ reduced to 0. As a result, the algorithm will end within at most $E$ rounds.

    We only need to prove that path $a$ always exists in Line \eqref{step:path} for each round. First, $y$ always remains a multi-commodity flow as Line \eqref{step:remove} will not affect the law of conservation in the network. By flow decomposition theorem, there exists simple paths $a_1,a_2,\cdots,a_p$ such that 
    $$y=\sum_{i\in[p]}w_ia_i,$$
    where $w_i>0$ are positive flow weights. As $y_{f_{\min}}>0$, there exists $a_i$ such that $f_{\min}\in a_i$. Then for any $f\in a_i$, $y_{f}\geq w_i>0$. As a result, path $a$ exists for Line \eqref{step:path}.
\end{proof}

\section{Experiments}\label{apx:exp}

We implemented our algorithm and conducted experiments on a classic example known as the nonlinear variant of Pigou's example \citep{NisaRougTardVazi07}. Concretely, nonlinear variant of Pigou's example is a routing game with one source node $s$ and one target node $t$. There are two edges connecting $s$ and $t$. One edge has constant cost $c_0(x)=c,\forall x\in[0,1]$ for some $c\in[0,1]$, and the other edge has polynomial cost $c_1(x)=x^p$. One important property of such games is the price of anarchy grows without bound as $p\rightarrow \infty$, which urges proper tax to induce socially optimal behavior.  

We apply our algorithm to learn the optimal tax with different $c_0$ and $p$. As we can see, the social welfare quickly converges to the optimal one. Another important observation is the learned tax function does not uniformly converge to the marginal cost tax, which is reasonable as accurate estimate is only necessary around the Nash equilibrium induced by the tax.

\begin{figure}[H]
    \centering
    \begin{subfigure}[b]{0.48\textwidth}
        \centering
        \includegraphics[width=\textwidth]{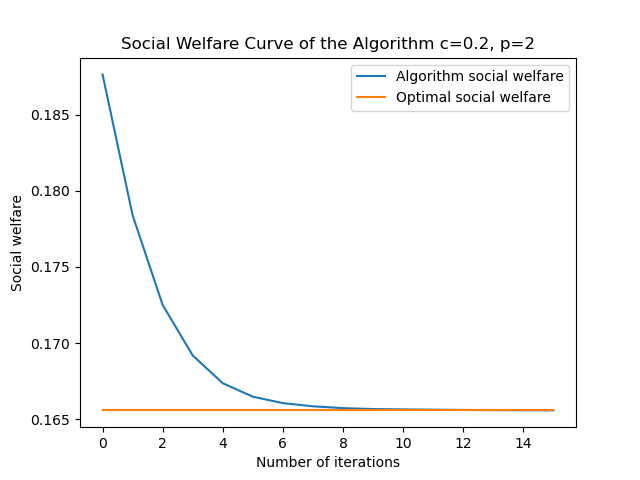}
        \caption{c=0.2, p=2}
        \label{fig:social_welfare_c02_p2}
    \end{subfigure}
    \hfill
    \begin{subfigure}[b]{0.48\textwidth}
        \centering
        \includegraphics[width=\textwidth]{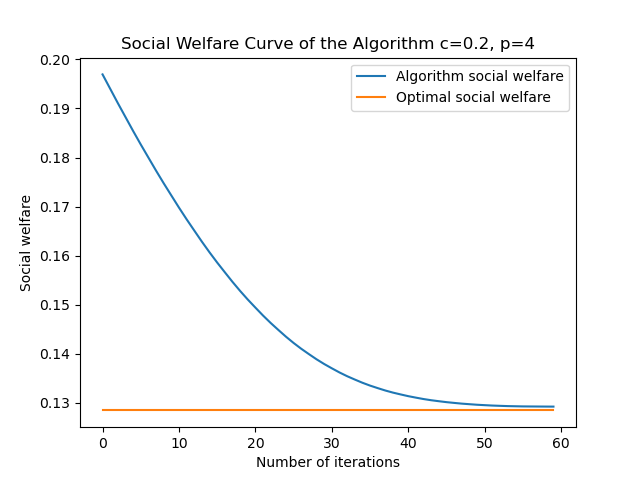}
        \caption{c=0.2, p=4}
        \label{fig:social_welfare_c02_p4}
    \end{subfigure}
    
    \vspace{1em}
    
    \begin{subfigure}[b]{0.48\textwidth}
        \centering
        \includegraphics[width=\textwidth]{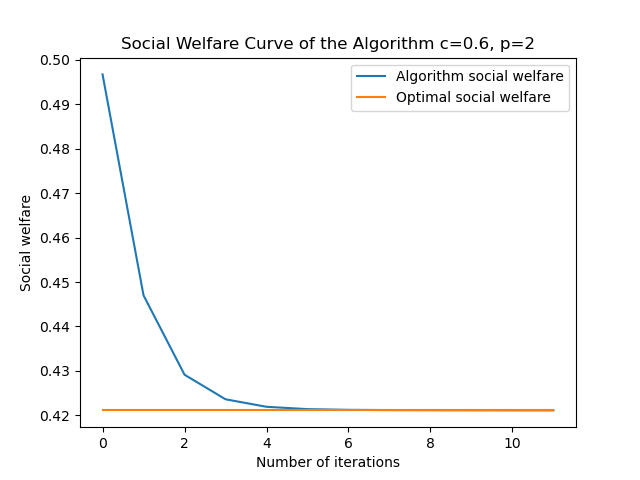}
        \caption{c=0.6, p=2}
        \label{fig:social_welfare_c06_p2}
    \end{subfigure}
    \hfill
    \begin{subfigure}[b]{0.48\textwidth}
        \centering
        \includegraphics[width=\textwidth]{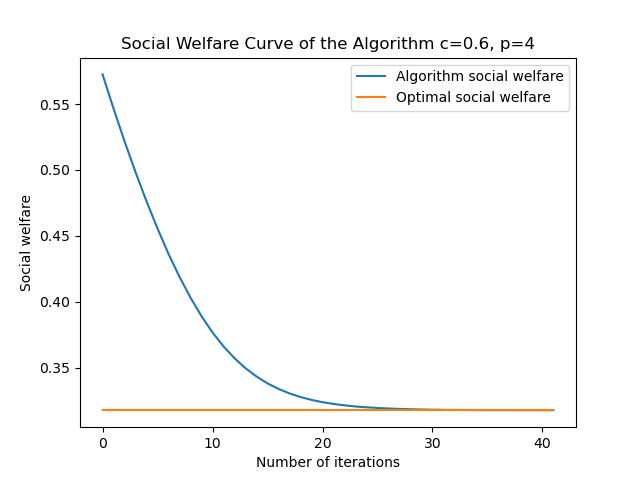}
        \caption{c=0.6, p=4}
        \label{fig:social_welfare_c06_p4}
    \end{subfigure}
    
    \vspace{1em}
    
    \begin{subfigure}[b]{0.48\textwidth}
        \centering
        \includegraphics[width=\textwidth]{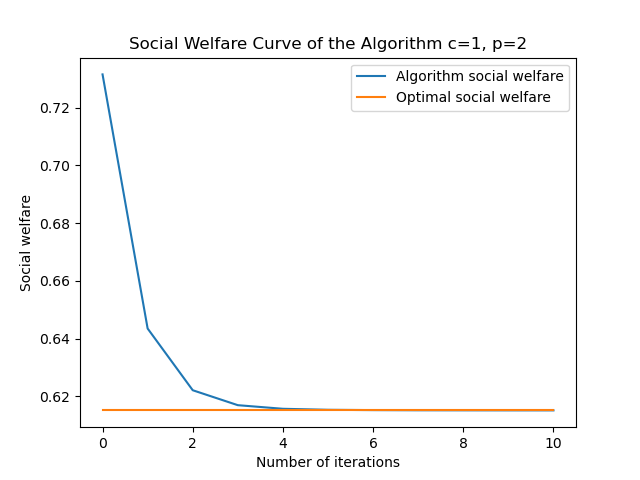}
        \caption{c=1, p=2}
        \label{fig:social_welfare_c1_p2}
    \end{subfigure}
    \hfill
    \begin{subfigure}[b]{0.48\textwidth}
        \centering
        \includegraphics[width=\textwidth]{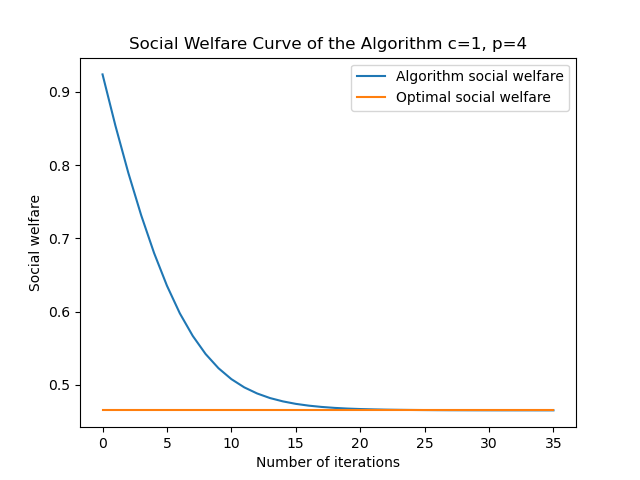}
        \caption{c=1, p=4}
        \label{fig:social_welfare_c1_p4}
    \end{subfigure}
    
    \caption{Social Welfare Curves of the Algorithm for various values of $c$ and $p$. We can observe that the social welfare converges to the optimal one quickly.}
    \label{fig:social_welfare}
\end{figure}

\begin{figure}[H]
    \centering
    \begin{subfigure}[b]{0.48\textwidth}
        \centering
        \includegraphics[width=\textwidth]{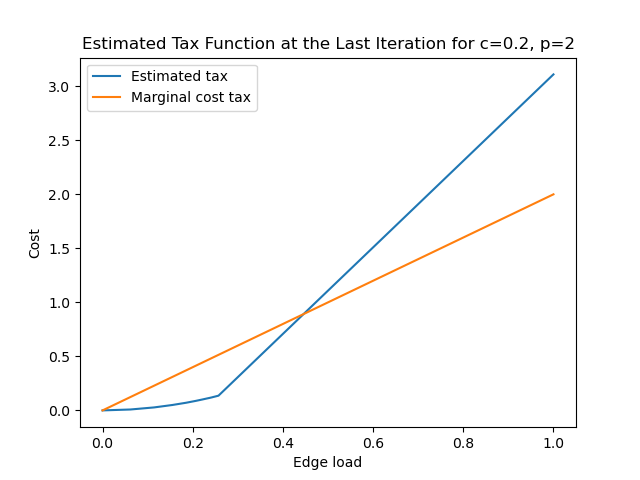}
        \caption{c=0.2, p=2}
        \label{fig:c02_p2}
    \end{subfigure}
    \hfill
    \begin{subfigure}[b]{0.48\textwidth}
        \centering
        \includegraphics[width=\textwidth]{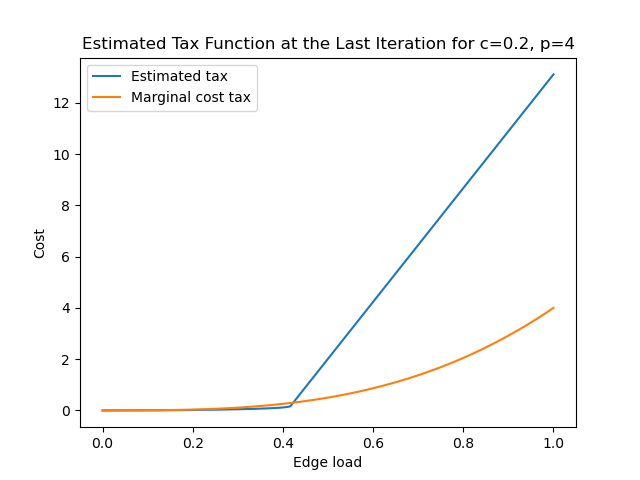}
        \caption{c=0.2, p=4}
        \label{fig:c02_p4}
    \end{subfigure}
    
    \vspace{1em}
    
    \begin{subfigure}[b]{0.48\textwidth}
        \centering
        \includegraphics[width=\textwidth]{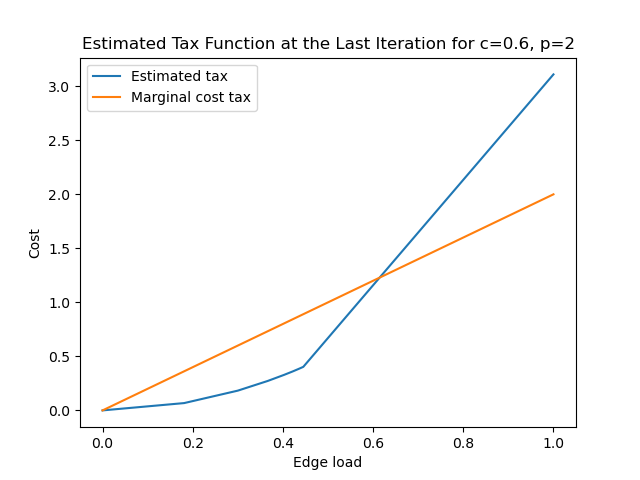}
        \caption{c=0.6, p=2}
        \label{fig:c06_p2}
    \end{subfigure}
    \hfill
    \begin{subfigure}[b]{0.48\textwidth}
        \centering
        \includegraphics[width=\textwidth]{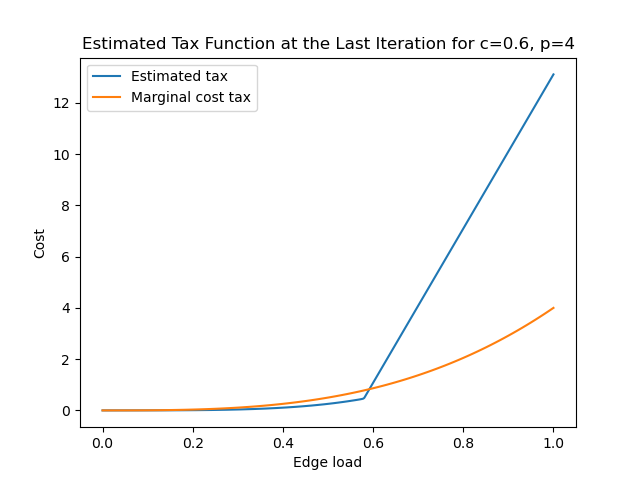}
        \caption{c=0.6, p=4}
        \label{fig:c06_p4}
    \end{subfigure}
    
    \vspace{1em}
    
    \begin{subfigure}[b]{0.48\textwidth}
        \centering
        \includegraphics[width=\textwidth]{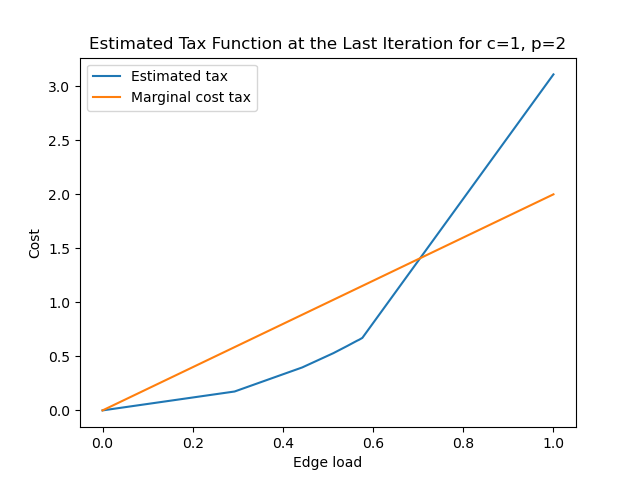}
        \caption{c=1, p=2}
        \label{fig:c1_p2}
    \end{subfigure}
    \hfill
    \begin{subfigure}[b]{0.48\textwidth}
        \centering
        \includegraphics[width=\textwidth]{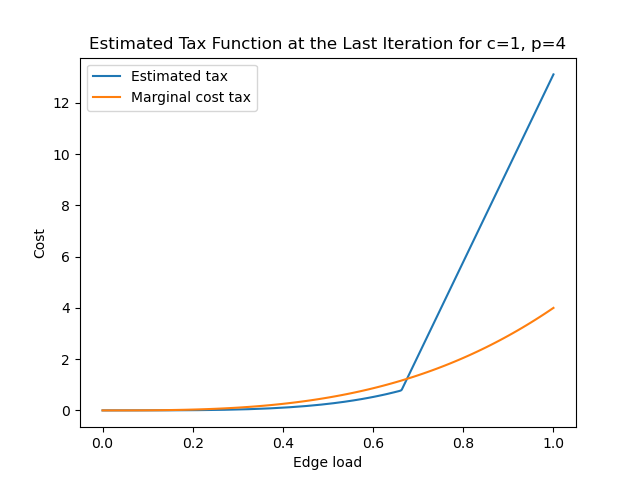}
        \caption{c=1, p=4}
        \label{fig:c1_p4}
    \end{subfigure}
    
    \caption{Estimated Tax Functions at the Last Iteration for various values of $c$ and $p$. The estimation is not uniformly accurate but they are accurate at the induced Nash equilibrium.}
    \label{fig:estimated_tax}
\end{figure}

\end{document}